\documentclass[lettersize,journal]{IEEEtran}

\usepackage{multirow}
\usepackage[textwidth=1cm,backgroundcolor=yellow,textsize=tiny]{todonotes}
\newcommand{\itamar}[1] {\todo[color=blue!20,inline]{Itamar: #1}}
\newcommand{\paolo}[1]{\todo[color=green!20,inline]{Paolo: #1}}

\newcommand{\ignore}[1]{}

\usepackage{cite}
\ifCLASSINFOpdf
\else
\fi



\usepackage{amsmath,amsfonts, amsthm, amssymb, bm, color, xspace, paralist, xcolor}
\usepackage{textcomp, stfloats, verbatim}
\usepackage{url}
\PassOptionsToPackage{hyphens}{url}
\usepackage{graphicx}
\hyphenation{op-tical net-works semi-conduc-tor IEEE-Xplore}

\newtheorem{theorem}{Theorem}

\newtheorem{property}{Property}
\newtheorem{proposition}[theorem]{Proposition}

\newcommand{\example}[2]{
        \rule{0.95\columnwidth}{0.5mm}\\
        \noindent {\bf Example:~#1}
        #2\\
        \rule{0.98\columnwidth}{0.5mm}
}

\newcommand{\abs}[1]{\left\vert#1\right\vert}
\newcommand{\set}[1]{\left\{#1\right\}}

\usepackage{mathtools}

\DeclarePairedDelimiter\ceil{\lceil}{\rceil}

\usepackage{algorithm}
\usepackage{comment, tagging}
\usepackage[noend]{algpseudocode}
\algnewcommand{\AND}{\textbf{and}\xspace}
\algnewcommand{\OR}{\textbf{or}\xspace}
\algnewcommand\algorithmicforeach{\textbf{for each}}
\algdef{S}[FOR]{ForEach}[1]{\algorithmicforeach\ #1\ \algorithmicdo}

\algrenewcommand\algorithmicindent{1.0em}%

\makeatletter
\algnewcommand{\LineComment}[1]{\Statex \hskip\ALG@thistlm \(\triangleright\) #1}
\makeatother


\newbox\statebox
\newcommand{\myState}[1]{%
    \setbox\statebox=\vbox{#1}%
    \edef\thealgruleheight{\dimexpr \the\ht\statebox+1pt\relax}%
    \edef\thealgruledepth{\dimexpr \the\dp\statebox+1pt\relax}%
    \ifdim\thealgruleheight<.75\baselineskip
        \def\thealgruleheight{\dimexpr .75\baselineskip+1pt\relax}%
    \fi
    \ifdim\thealgruledepth<.25\baselineskip
        \def\thealgruledepth{\dimexpr .25\baselineskip+1pt\relax}%
    \fi
    \State #1%
    \def\thealgruleheight{\dimexpr .75\baselineskip+1pt\relax}%
    \def\thealgruledepth{\dimexpr .25\baselineskip+1pt\relax}%
}

\makeatletter
\def\blfootnote{\xdef\@thefnmark{}\@footnotetext}
\makeatother

\usepackage{float, graphicx, tabu, color, tabto} %
\usepackage[caption=false]{subfig}

\usepackage{stackengine} 

\usepackage{array}
\newcolumntype{P}[1]{>{\centering\arraybackslash}p{#1}}
\newcolumntype{M}[1]{>{\centering\arraybackslash}m{#1}}
\usepackage{setspace} 							 


\newcommand{\Gc}{\mathcal G}

\newcommand{\Lc}{\mathcal L}
\newcommand{\Mc}{\mathcal M}

\newcommand{\Pc}{\mathcal P}

\newcommand{\Sc}{\mathcal S}
\newcommand{\Rc}{\mathcal R}
\newcommand{\Tc}{\mathcal T}

\newcommand{\mv}{m}

\newcommand{\yv}{\mathbf y}

\newcommand{\maxCpuPerService}{\beta^{r,s}_M}
\newcommand{\alg} {DASDEC}
\newcommand{\migProb}{PMP}
\newcommand{\cpuall}{GFA}
\newcommand{\bu}{SFS} 
\newcommand{\fMode}{F} 
\newcommand{\buFMode}{\fMode-\bu}

\newcommand{\pd}{PD}

\DeclareMathOperator{\buMath}{SFS}
\DeclareMathOperator{\pdMath}{\pd}

\newcommand{\pu}{PU}

\newcommand{\lBound}{LBound}
\newcommand{\notAssigned}{\tilde{\Mc}} 
\newcommand{\notAssignedOfS}{\notAssigned_s} 
\newcommand{\pushUpSet}{\Pc^U} 
\newcommand{\pushUpSetOfS}{\pushUpSet_s} 
\newcommand{\pushDownSet}{\Pc^D} 
\newcommand{\pushDownSetOfS}{\pushDownSet_s} 

\newcommand{\ms}{MultiScaler}
\newcommand{\deficitCpu}{deficitCpu}
\DeclareMathOperator{\deficitCpuMath}{\deficitCpu}

\newcommand{\accDelayBu}{T^{\buMath}_{ad}}
\newcommand{\accDelayPd}{T^{\pdMath}_{ad}}
\newcommand{\ffit}{F-Fit}
\newcommand{\cpvnf}{CPVNF}


\newcommand{\subFigWidth}{0.98 \columnwidth}

\newcommand{\algFontSize}{\scriptsize}

\begin{document}
\title{Distributed Asynchronous Service Deployment in Edge-Cloud Multi-Tier Networks}

\author{Itamar Cohen,~\IEEEmembership{Member,~IEEE} Paolo~Giaccone,~\IEEEmembership{Senior~Member,~IEEE}, Carla Fabiana Chiasserini,~\IEEEmembership{Fellow,~IEEE}
\IEEEcompsocitemizethanks{\IEEEcompsocthanksitem
The authors are with Politecnico di Torino; e-mail: \{firstname.lastname\}@polito.it. C.\,F.\, Chiasserini and P.~Giaccone are also with CNIT, Parma, Italy, and CNR-IEIIT, Torino, Italy.}
\thanks{This work was supported by the European Union’s NextGenerationEU instrument, under the Italian National Recovery and Resilience Plan (NRRP), M4C2 Investment 1.3, “Telecommunications of the Future” (PE00000001), program “RESTART”, and by the European Commission through Grant No. 101095890 (Horizon Europe SNS JU PREDICT-6G project).}}

\maketitle

\begin{abstract}
In an edge-cloud multi-tier network, datacenters provide services to mobile users, with each service having specific 
latency and computational requirements. Deploying such a variety of services while matching their requirements with the available computing resources is challenging. In addition, time-critical services may have to be migrated as the users move, to keep fulfilling their latency constraints.  
Unlike previous work relying on an orchestrator with an always-updated global view of the available resources and the users' locations, this work envisions a distributed solution to the above problems. 
In particular, we propose a distributed asynchronous framework for service deployment in the edge-cloud that increases the system resilience by avoiding a single point of failure, as in the case of a central orchestrator. 
Our solution ensures cost-efficient feasible placement of services, with negligible bandwidth overhead.
Our results, obtained through trace-driven, large-scale simulations, 
show that the proposed solution provides performance very close to those obtained by state-of-the-art centralized solutions, and at the cost of a small communication overhead.
\end{abstract}

\begin{IEEEkeywords}
Edge computing, NFV, resource allocation, service placement, service migration.
\end{IEEEkeywords}

{\section{Introduction}\label{sec:introduction}}
\IEEEPARstart{T}{oday's} networks offer extensive virtualized resources, structured as a collection of datacenters across the different network layers from the edge to the cloud~\cite{FollowMe_J, micado_orchestrator, SFC_mig, orch_cloud2edge_survey, MoveWithMe, Justify_CLP_tree_ISPs}.
These datacenters host a plethora of applications with versatile computational requirements and latency constraints. For example, time-critical services such as road safety applications require low latency, necessitating processing in an edge datacenter, close to the user. In contrast, infotainment tasks may require larger computational resources, but have looser latency constraints and therefore may be placed on a cloud datacenter with abundant and affordable computation resources~\cite{tong2016_justify_path_to_root, SFC_mig}. 
Deploying services in an edge-cloud multi-tier network while considering computational and network resources is challenging. The complexity increases when user mobility or fluctuating traffic demands necessitate migrating services to maintain low latency and meet KPI constraints~\cite{FollowMe_J,MoveWithMe, mig_correlated_VMs}.				   
Additionally, some services may require migration as users move to maintain latency constraints~\cite{FollowMe_J,MoveWithMe_short, mig_correlated_VMs}.

Providing stable service to such heterogeneous mobile users is a complex problem, akin to dynamic bin-packing, with conflicting optimization goals. Indeed, meeting the latency constraints incentivizes deploying more services on the edge, closer to the user. At the same time, to reduce cost and minimize the number of future migrations, one should deploy more services in the cloud, where resources are cheaper. Finally, if a user with tight delay constraints moves towards an edge datacenter that does not have enough available resources, some services already deployed on it should be migrated to make room for the new service. 

{\bf Research gaps.} 
Most existing solutions~\cite{FollowMe_J,Justifies_path_to_root_n_CLP_vehs, Dynamic_Service_Provisioning_ToN, CPVNF_proactive_place_in_CDN, micado_orchestrator, SFC_mig_mechanism, orch_cloud2edge_survey, MultiScaler_Kassler, dynamic_sched_and_reconf_t, agarwal-19, Avatar, SFC_mig, Dynamic_SFC_by_rtng_Dijkstra}
rely on a central orchestrator for service deployment and migration decisions.
The orchestrator periodically (i) gathers information about the status of the resources and the migration requirements, (ii) calculates new placement and resource allocation, and (iii) instructs datacenter local controllers accordingly.
This centralized, synchronous approach has several shortcomings. {\em First}, gathering fresh state information causes significant communication bottlenecks and a severe incast congestion at the orchestrator's input, even within a single cloud datacenter~\cite{APSR}. Hence, the centralized approach does not scale well, thus failing to manage systems with multiple datacenters efficiently. 
As a result, in an actual implementation, the entire system state can only be roughly estimated. However, inaccurate state estimation may significantly deteriorate performance~\cite{Obl_vs_stateful_edge-cloud_Marco}. {\em Second}, the orchestrator is a natural single point of failure, compromising the system's reliability. {\em Third}, the datacenters may be operated by distinct operators~\cite{Crosshaul, NFV_ego_learning}, which are typically unwilling to share proprietary information and implementation details with competitors. 

\textbf{Our contribution.} 
We present the Distributed Asynchronous Service Deployment in the Edge-Cloud  (\alg) framework, which minimizes the migration and placement costs.
In \alg, a user experiencing increasing latency beyond an acceptable threshold triggers a migration request from the edge nodes to datacenters to which the corresponding service may be migrated to meet the latency constraints. As a result, \alg\
\begin{itemize}
\item shows a negligible communication overhead by using simple control messages transmitted by a datacenter to only relevant neighboring datacenters;
\item runs asynchronously across distinct datacenters without requiring global synchronization; 
\item can find a deployment solution even with small available resources in the network for the services; 
\item allows multiple datacenters -- possibly of distinct providers -- to cooperate without exposing proprietary information. 
Indeed, it assumes that each datacenter knows only its available resources, and does not require such proprietary information to be shared with other datacenters.
\end{itemize} 

\textbf{Paper organization.}		   
We present the system model in Sec.~\ref{sec:system}. Sec.~\ref{sec:problem} formulates the placement and migration problem and describes the algorithmic challenges. Sec.~\ref{sec:alg} details the \alg\ algorithmic framework. Sec.~\ref{sec:sim} evaluates the performance of \alg\ in various settings.
Finally,  we review some relevant related work in Sec.~\ref{sec:related_work} and draw conclusions in Sec.~\ref{sec:conclusion}.

\section{System Model}\label{sec:system}

This section formally defines our system model and notations, which are summarized in Table~\ref{tab:notations}.
\begin{figure}
    \centering
    \includegraphics[width=\columnwidth]{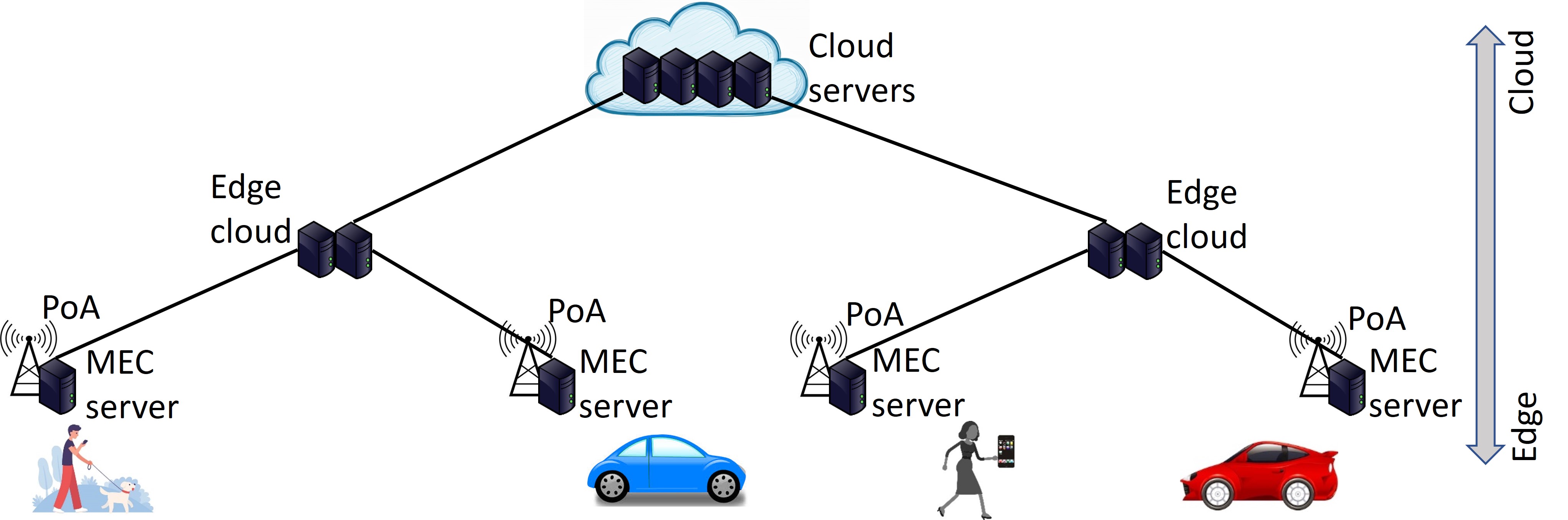}
    \caption{    
    Reference scenario: Each mobile user (illustrated by a car or a human) is associated with the nearest PoA, which is equipped with a co-located Mobile Edge Computing (MEC) server. The MEC servers are connected to higher-level servers with more computational capacity. Our model refers to such servers in the edge-cloud continuum as datacenters.    
    }
    \label{fig:model}
\end{figure}

We consider a fat-tree edge-cloud multi-tier network  architecture, which comprises~\cite{tong2016_justify_path_to_root}: 
\begin{inparaenum}[(i)]
\item {\em datacenters}, denoting generic computing resources,
\item {\em switches}, and
\item {\em Points of Access (PoA)}, each equipped with a radio interface and connected with an edge datacenter~\cite{Mig_in_Mobile_Edge_Clouds}. 
\end{inparaenum}
Datacenters are interconnected via switches, while users connect to the network through a PoA, which may change as they move. 
By denoting the set of {\em datacenters} with $\Sc$, we model the logical multi-tier network under study  as a directed graph ${\Gc}{=} (\Sc,\Lc)$, where   
the edges are the directed virtual links connecting the datacenters.
As done  in~\cite{Dynamic_Service_Provisioning_ToN, Justify_CLP_tree_ISPs, tong2016_justify_path_to_root}, We assume a single predetermined loop-free path exists between each pair of datacenters, and denote with $\Tc(s)$ the sub-tree rooted by datacenter $s$. 
When referring to datacenter positions (i.e., using terms such as up, down, below, above, etc.), we consider a network topology where the root is at the top of the topology and the leaves are at the bottom. 	

We consider a generic user generating a {\em service request} $r$, originating at the PoA $p^r$, to which the user is currently connected. 
The request $r$ can be served by placing service instance $\mv^r$ on a datacenter.
Let $\Rc$ denote the set of service requests, and $\Mc$ the set of corresponding services that are currently placed, or need to be placed, on datacenters. An example of the system is depicted in Fig.~\ref{fig:model}.

Each service is associated with an SLA, which specifies its key performance indicator (KPI) constraints~\cite{Okpi}.
Let us consider latency the most relevant KPI, although our model could also be extended to other KPIs, e.g.,  throughput.
Each service request $\mv^r$ is subject to the latency constraint $d^{\max}_{\mv^r}$, which is the maximum acceptable end-to-end round-trip delay to access the datacenter serving $\mv^r$.
Due to its latency constraint, each request $r$ is associated with a list of {\em latency-feasible} datacenters $\Sc_r$. 
The latency-feasible datacenters in $\Sc_r$ are not too far from $r$'s PoA ($p^r$), 
located along the path from the  $p^r$ to the root, coherently with~\cite{Justifies_path_to_root_n_CLP_vehs, Justify_CLP_tree_ISPs, Dynamic_Service_Provisioning_ToN}.
The top latency-feasible datacenter of request $r$ is denoted by $\bar{s}_r$. 

To successfully serve request $r$ on datacenter $s \in \Sc_r$ while satisfying the KPI constraint, $s$ should allocate (at least) $\beta^{r,s}$ CPU units, where $\beta^{r,s}$ is an integer multiple of a basic CPU speed. 
Since $\Sc_r$ and $\beta^{r,s}$ can be easily computed given the network and service characteristics~\cite{Dynamic_Service_Provisioning_ToN}, we consider  $\Sc_r$ and $\beta^{r,s}$ as known input parameters. 
Each datacenter $s\in \Sc$ has a total processing capacity $C_s$, expressed in the number of CPU cycles/s.
											
\begin{table}
\centering
    \small
    \caption{Main notations
    \label{tab:notations}}
    \begin{tabular}{|M{0.17\columnwidth}|m{0.71\columnwidth}|}
        \hline 
	    {\bf Symbol} & \multicolumn{1}{c|} {\bf Description} \tabularnewline
	    \hline 
	    \multicolumn{2}{|c|}{System model (Sec.~\ref{sec:system})
	    }\\
        \hline
        $\Gc$ & Network graph \tabularnewline
        $D(\Gc)$ & Diameter  of network graph $\Gc$ \tabularnewline
        $\Mc$ & Set of currently active services \tabularnewline
        $\Rc$ & Set of service requests \tabularnewline
        $p^r$ & Point of Access where request $r$ is generated \tabularnewline
        $\Sc$ & Set of datacenters \tabularnewline
        $\Tc(s)$ & Sub-tree rooted by datacenter $s$\tabularnewline
        $\mv^r$ & Service associated with request $r$ \tabularnewline
        $d^{\max}_{\mv^r}$ & latency constraint for service $\mv^r$ \\
        $\beta^{r,s}$ & Minimal number of CPU units required to place service $\mv^r$ on datacenter $s$ \tabularnewline
        $\psi^c(r,s)$ & Cost of placing service $r$ on datacenter $s$\tabularnewline
        $\psi^m(u, s, s')$ & Cost of migrating service $\mv^r$  from datacenter $s$ 
        to datacenter $s'$\tabularnewline
        $C_s$ & Overall processing capacity of datacenter $s$ [CPU cycles/s] \tabularnewline
        $\Sc_r$ & Set of latency-feasible datacenters for request $r$ \tabularnewline
        $\overline{s}_r$ & Top datacenter in $\Sc_r$\tabularnewline
	    \hline 
	    \multicolumn{2}{|c|}{\alg\ (Sec.~\ref{sec:alg})
	    }\\
        \hline
        $x (u,s)$ & Placement indicator: equal to 1 iff service $\mv^r$ is currently hosted on datacenter $s$\tabularnewline
        $y (r,s)$ & Boolean decision variable: equal to 1 iff service $\mv^r$ is scheduled to run on datacenter $s$\tabularnewline
        $\phi$ & Objective function~\eqref{Eq:def_obj_func} \tabularnewline
        $\pushUpSetOfS$ & Set of assigned services to be pushed up\tabularnewline
        $\pushDownSetOfS$ & Set of services to be pushed down.\tabularnewline   
        $a_s$ & Available processing capacity of datacenter $s$ \tabularnewline   
        $\notAssigned$ & Set of currently unassigned services\tabularnewline   
        \hline
        \end{tabular}
\end{table}

\section{The Placement and Migration Problem}\label{sec:problem}
											   
We start by observing that the latency experienced by a service may vary over time due to either
\begin{inparaenum}[(i)]
\item a change in the user's PoA (due to mobility), which changes the experienced {\em network latency}, or
\item a fluctuation in the service traffic, and, hence,  in the corresponding {\em processing latency} at the datacenter hosting the service~\cite{Dynamic_user_demands}.
\end{inparaenum}
The service client at the user continuously monitors 
its Quality of Experience (QoE) and warns its PoA as the experienced latency approaches the constraint\footnote{If the user can predict its near-future location, it can inform the network when the latency constraint is {\em about} to be violated.}. 
The user is then tagged as {\em critical},  i.e., the latency constraint is about to be violated, and the PoA triggers a {\em migration process} for it.  
Furthermore, the PoA is in charge of {\em placing instances of services} that have been newly requested by users under its control. The migration of service instances corresponding to critical users (hereinafter also referred to as critical services) may also involve the migration of other already-deployed non-critical services, whenever this is necessary (i) to successfully place newly arrived service requests {\em and} (ii) satisfy the latency constraints of all -- new and already deployed -- service instances. 

We now formalize the {\em Placement and Migration Problem (PMP)}, which aims to minimize the cost of overall service migration and placement operations.

{\bf Decision variables.} 
A placement strategy must accommodate new requests and those that become critical due to user mobility.
Let {matrix} $\yv$ 
represent the Boolean placement decision variables, i.e., $y(r,s) {=} 1$ if service $\mv^r$ is scheduled to run on datacenter $s$. 
Any choice for the values of such variables provides a {\em solution} to the PMP, determining (i) where to deploy new services, (ii) which existing services to migrate, and (iii) where to migrate them.

A {\em synchronous} placement algorithm should determine the placement of all new and critical requests once in a pre-defined time slot (e.g., once in a second). 
An {\em asynchronous} algorithm, on the other hand, looks for placement of each request as soon as it arrives (for a new request), or becomes critical (for an existing request). 

{\bf Constraints.} 
The following conditions must be met:
\begin{align}
\textstyle
\sum_{s \in \Sc} y(r,s)
&= 1
&\forall r \in \Rc \label{problem_constraint:single_placement} \\
\textstyle
\sum_{r \in \Rc} y(r,s) \beta^{r,s}
&\leq C_s
&\forall s \in \Sc\,. \label{problem_constraint:datacenter_computational_residual_capacity}
\end{align}
Constraint \eqref{problem_constraint:single_placement} ensures that at any point in time, each service $\mv^r$ is associated with a single {scheduled} placement. Constraint \eqref{problem_constraint:datacenter_computational_residual_capacity}, instead, assures that the capacity of each datacenter is not exceeded while guaranteeing that the maximum processing latency required for a service instance is met. 
Note that~\eqref{problem_constraint:datacenter_computational_residual_capacity} also embodies the latency constraints, as a higher $\beta^{r,s}$ is required to deploy a service with tighter latency constraint; in particular, the case where deploying request $r$ on datacenter $s$ violates $r$'s latency constraint is captured by determining $\beta^{r,s}=\infty$.

{\bf Costs.}
Placing service $r$ on datacenter $s$ incurs a processing cost $\psi^c(r,s)$, which captures the computational resource usage for service execution and network resources for data transfer.
As cloud computational resources, as well as the bandwidth in the network connecting cloud datacenters, are cheaper~\cite{tong2016_justify_path_to_root, SFC_mig} than edge datacenters, we assume that, if $s$ is an ancestor of $s'$, placing service $\mv^r$ on $s$ is cheaper than placing $\mv^r$ on $s'$.
The overall processing cost across all the datacenter is given by:
\[
\sum_{s \in \Sc}     \sum_{r \in \Rc}     y(r,s) \psi^c(r,s)
\]

Migrating service $r$ from datacenter $s$ to data center $s'$ incurs a {\em migration cost} $\psi^m(r,s,s')$, representing the network resources usage, migration pre-copy/post-copy process. 
To describe the placement according to the current deployment, we define the binary variable  $x(r,s)$ defined equal to $1$ iff service $\mv^r$ is currently placed on datacenter $s$.  Notice that $x(r,s)$ is not a decision variable.

We assume that a service instance 
does not become critical again before it finishes being placed based on the decision made by the previous solution to the PMP. 
The migration cost incurred by a critical service $\mv^r$ is then given by: 
\begin{align}
\sum_{s\neq s' \in \Sc} x(r,s) \cdot y(r,s') \cdot \psi^m(r,s,s'). \notag
\end{align} 

{\bf Objective.}
The Placement and Migration Problem (\migProb) is to minimize the overall migration and processing costs:
    \begin{multline}\label{Eq:def_obj_func}
    \phi(\yv) =
    \sum_{{s\neq s' \in \Sc }} 
    \sum_{r \in \Rc} 
    x(r,s) \cdot y(r,s') \cdot \psi^m(r,s,s') \\
    + 
    \sum_{s \in \Sc} 
    \sum_{r \in \Rc} 
    y(r,s) \psi^c(r,s)
    \end{multline}
subject to constraints~\eqref{problem_constraint:single_placement}--\eqref{problem_constraint:datacenter_computational_residual_capacity}.

The following property 
captures the complexity of the \migProb. 
\begin{property}
\label{prop:hard}
The problem of minimizing \eqref{Eq:def_obj_func} subject to~\eqref{problem_constraint:single_placement}--\eqref{problem_constraint:datacenter_computational_residual_capacity} is NP-hard.
\end{property}
\begin{IEEEproof}
    Our proof is established through a reduction from the NP-hard minimal global assignment problem (MINGAP)~\cite{Book_GAP_and_even_knowing_if_exists_feas_is_NPH}. In MINGAP, the input comprises a list of items and knapsacks. The capacity of knapsack $j$ is $C_j$; assigning item $i$ into knapsack $j$ incurs cost $c_{ij}$ and weight $w_{ij}$. The objective is to find a feasible assignment of all the items to knapsacks, while minimizing the overall cost.
    
     We now present a polynomial reduction from the MINGAP to the \migProb.
    The processing capacity of each datacenter $C_s$ is identical to the size of the respective knapsack. 
 The number of CPU units required to place each service on a datacenter is $\beta$.     
    Hence, there exists a feasible solution for \migProb\ for this input iff there exists a solution to MINGAP.
\end{IEEEproof}

\subsection{Algorithmic Challenges}\label{sec:alg_concept}
We are interested in a \textit{distributed solution, where no single datacenter (or any other entity) has a fresh all-encompassing view of the resource status} (e.g., the current place of each service, or the amount of available resources in each datacenter). Conversely,  the computation for service  placement and migration  
should involve a subset of the datacenters as small as possible.
Furthermore, the solution should be \textit{asynchronous}, as distinct PoAs may independently invoke different, simultaneous algorithm runs. Developing such a distributed asynchronous solution for the \migProb\ poses the following challenges.  

{\em 1. Feasibility vs.\ cost efficiency.}
A cost-effective approach may seem to involve placing services closer to the cloud because (i) resources are cheaper there, and (ii) higher-tier placement may reduce future migrations. 
However, such an approach may prevent an algorithm from finding feasible solutions, even when they exist.  
For example, consider the scenario depicted in Fig.~\ref{fig:need_bu_problem}, where requests can be placed at every level of the tree, and the capacity of each datacenter is sufficient for only one service. 
Requests arrive in the order of $r_0, r_1, r_2, r_3$. Let us consider a naive algorithm that, according to the order of arrival of the requests $r_i$ ($i{=}0,\ldots,3$), places each requested service on its highest latency-feasible datacenter, which is still available at the time of arrival on $\Sc_{r_i}$.   As illustrated in Fig.~\ref{fig:need_bu_problem}, after $r_0$, $r_1$, and $r_2$ have been placed, request $r_3$ finds all its latency-feasible datacenters full, causing the naive algorithm to fail. Instead, Fig.~\ref{fig:need_bu_sol} shows that a feasible solution for the problem exists. This confirms that the placement strategy of always choosing the highest-available datacenter may fail to find a feasible solution. Hence, online algorithms must establish the proper priority between feasibility and cost efficiency. 

\begin{figure}
    \centering
    \subfloat[\label{fig:need_bu_problem}Failure of naive placement algorithm] {
     \includegraphics[height=2.0 cm]{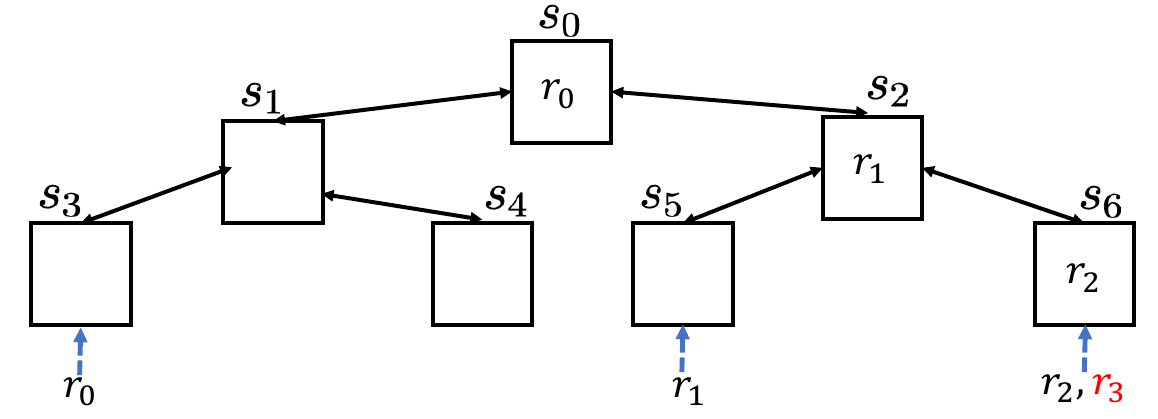}
    }
    
    \subfloat[\label{fig:need_bu_sol}Feasible placement] { 
    \includegraphics[height=2.0 cm]     {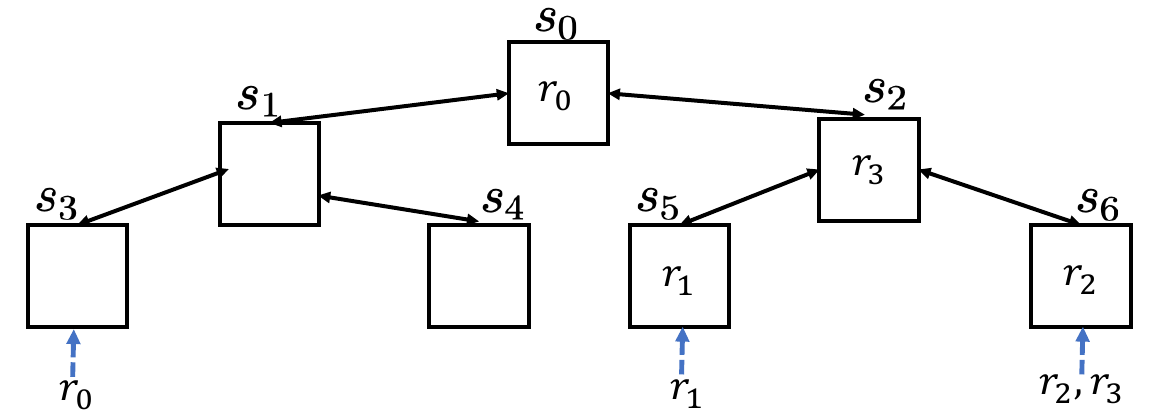} 
    }

    \subfloat[\label{fig:need_bu_r_3_finish}Feasible placement after $r_3$ finishes] { 
    \includegraphics[height=2.0 cm]     {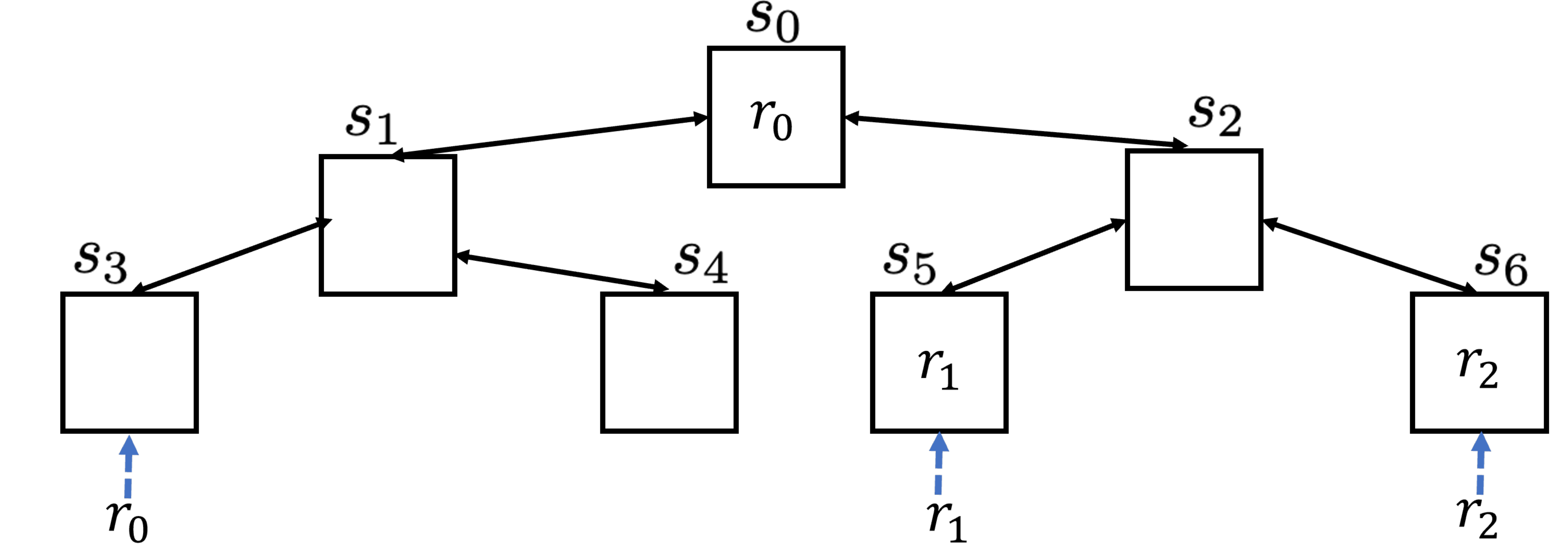} 
    }
    
    \subfloat[\label{fig:need_bu_after_pu}Feasible placement after pushing-up $r_1$] { 
    \includegraphics[height=2.0 cm]     {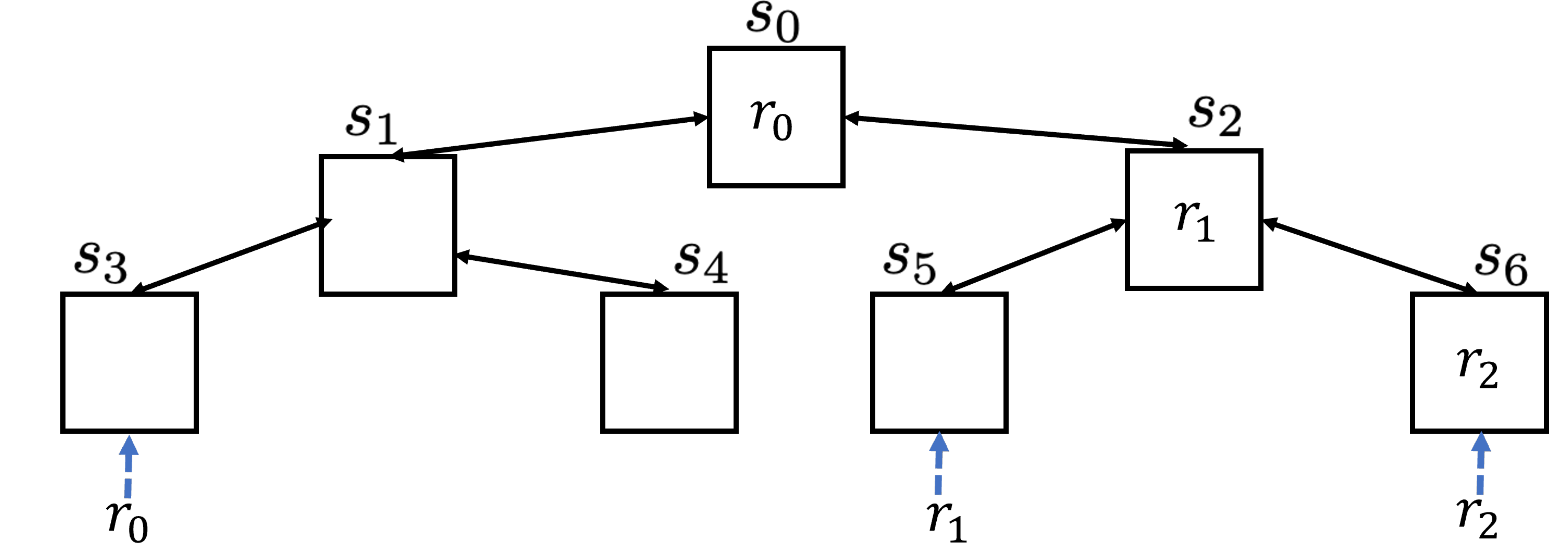} 
    }
    
    \subfloat[\label{fig:need_bu_fail_again}Fall back into a  scenario similar to the initial scenario (Fig.~\ref{fig:need_bu_problem})] { 
    \includegraphics[height=2.0 cm]     {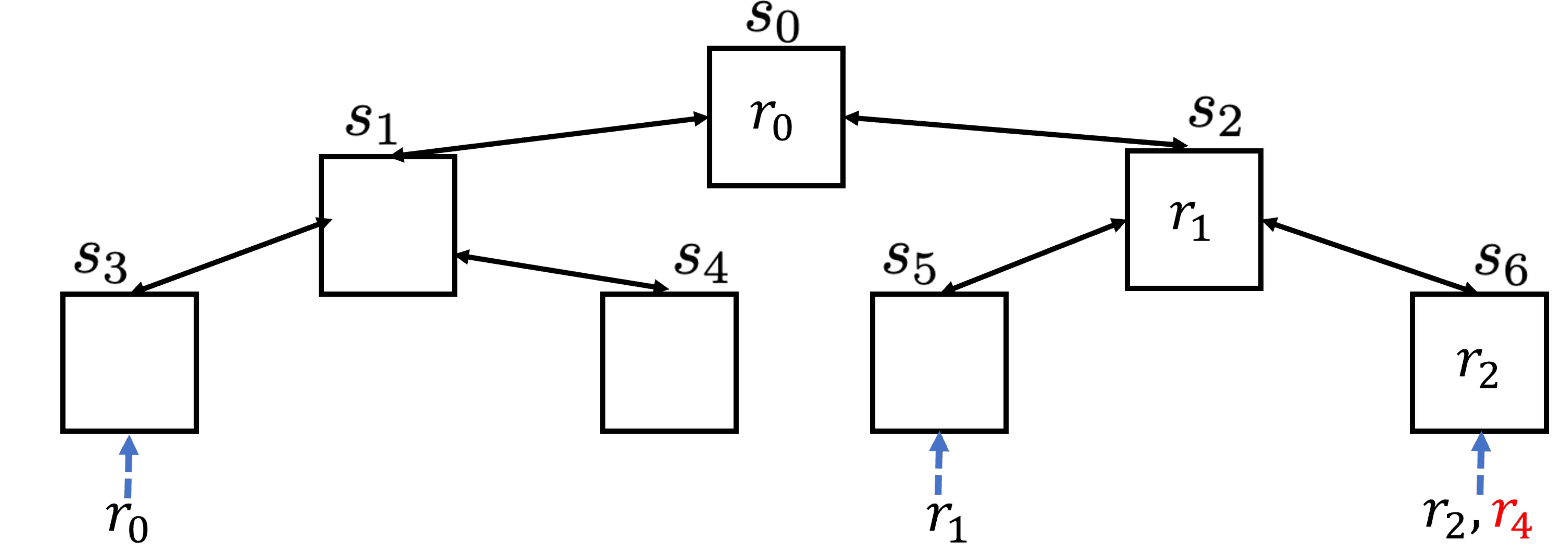} 
    }
    \caption{
            A scenario with seven datacenters (the cloud is at the root of the topology).         The PoAs of requests $r_0, r_1$, $r_2$,
            and $r_3$ are $s_3$, $s_5$, $s_6$, and $s_6$, respectively.
            A naive algorithm fails to place $r_3$ (Fig.~\ref{fig:need_bu_problem}), while there exists a feasible placement (Fig.~\ref{fig:need_bu_sol}). 
            After $r_3$ finishes (Fig.~\ref{fig:need_bu_r_3_finish}), an imprudent  algorithm may push request $r_1$ up again to datacenter $s_2$ (Fig.~\ref{fig:need_bu_after_pu}). This may result in falling back into a scenario that requires push-down (Fig.~\ref{fig:need_bu_fail_again}).
        }
    \label{fig:need_bu}
\end{figure}

{\em 2. Compulsory migration of non-critical services.}
In the online case, scenarios exist where it is necessary to migrate a non-critical service to make room for a critical service. 
For instance, suppose that the state of the placement 
is shown in Fig.~\ref{fig:need_bu_problem}. To place request $r_3$, the placement algorithm must migrate the non-critical service $r_1$ from $s_2$ to $s_5$. 
Migrating non-critical services in asynchronous settings, where new requests may arrive while the placement algorithm still works to make room for older service requests, is challenging. Indeed, an ill-planned migration scheme may result in races and even a nonconvergent algorithm.

{\em 3. Communication overhead.}
In distributed settings, datacenters must exchange information to ensure consistency—e.g., preventing multiple datacenters from simultaneously fulfilling the same service request. 
However, the communication overhead entailed by a solution strategy should be minimal to ensure that the amount of in-band control traffic exchanged is negligible compared to the data-plane service traffic. 

\section{The \alg\ Algorithmic Framework}\label{sec:alg}

We now present our algorithmic solution to \migProb, named  Distributed Asynchronous Service Deployment in the Edge-Cloud  (\alg). 
We start with a high-level description of the overall framework (Sec.~\ref{sec:alg:hi-lvl}) and then provide the details of the single algorithmic components (Sec.~\ref{sub:algo}). 
Next, we show how to tackle challenges that arise due to the asynchronous nature of the algorithm, namely, risks of races and non-convergence (Sec.~\ref{sec:alg:assure_converge}).  
Lastly, we consider the algorithm's communication overhead (Sec.~\ref{sec:reduce_commoh}).   

\subsection{Overview of the \alg\ framework}\label{sec:alg:hi-lvl}	 		

Each time a PoA identifies the existence of {\em not-assigned} requests -- namely, new and critical service requests -- the PoA triggers a run of \alg.
\alg\ tries to place each such request on a corresponding delay-feasible datacenter, with the following design principles, each of them corresponding to a specific stage in the algorithm:
\begin{enumerate}
\item Initially, seek a feasible solution by assigning each request to the lowest datacenter (i.e., closer to the edge) with enough resources.
\item Next, check for each such assigned service whether an ancestor datacenter can host it. If so, de-assign the service from the lower datacenter and let that ancestor place it. Otherwise, place the service on the lower datacenter.
\item If services that could not be assigned exist, try to free resources for these services by relocating services to lower datacenters.
\end{enumerate} 

Being distributed and asynchronous, \alg\ may run simultaneously across multiple datacenters. Distinct datacenters run local instances of the algorithm, composed of three main stages, as detailed below.

\textbf{Stage 1: Seek Feasible Solution (\bu).} 
This stage is triggered when a PoA detects unassigned requests, including new and critical service requests.
Each such not-assigned request triggers a run of SFS at the leaf datacenter co-located with the corresponding PoA.
\bu's goal is to find an initial feasible solution. 

Each datacenter runs this stage 
as follows. Let $s$ be the considered datacenter.
First, $s$ sorts the not-assigned requests 
 by 
an increasing 
number of higher-level datacenters that can host each request. Intuitively, a request with fewer degrees of freedom to be placed above $s$ must 
be placed for first.
Thus, $s$ tries to {\em assign} the requests according to the above order.
Assigning request $r$ means reserving a sufficient amount of computing resources (namely, CPU\footnote{Although we focus specifically on CPU as a computational resource, memory, and storage could be considered as well.}) 
to place it locally. 
If $r$'s latency constraint $d^{\max}_{\mv^r}$  allows placing $r$ above $s$, then
$s$ requests its ancestors to {\em push-up} $r$, to reduce costs. 
This {push-up} mechanism will participate in stage 2 of the algorithm, as detailed below.
If $r$'s latency constraint does not allow placing $r$ above $s$, then
$s$
places $r$ locally.  
If a request $r$ travels through Stage 1 all the way from its PoA up to its top latency-feasible datacenter $\bar{s}_r$ without being successfully assigned, then it is necessary to make room for $r$ to obtain a feasible solution. To do that, $s$ triggers Stage 3 ({\em Push-Down}), as described later.

\textbf{Stage 2: Push-Up (PU).} 
This algorithmic stage handles the push-up requests mentioned above. The goal now is to decrease the CPU cost by pushing up services. To prioritize pushing up services as high as possible, this algorithmic stage runs top-down. That is, a push-up request $r$ is considered first at $r$'s highest delay-feasible datacenter, $\bar{s}_r$; if $\bar{s}_r$ does not have enough available resources to host $r$, then $\bar{s}_r$ propagates the push-up request to its relevant child (the child on the path from $\bar{s}_r$ to the $r$'s PoA). A parent datacenter notifies its descendants about the status of their push-up requests using pos-ack (meaning that the service was pushed-up), or neg-ack (meaning that the service wasn't pushed-up).
If a datacenter $s$ receives a pos-ack for a service $r$ that $s$ previously assigned, then $s$ de-assigns $r$.

\textbf{Stage 3: Push-Down (PD).} 
If Stage 1 fails to assign a request $r$ on any of its latency-feasible datacenters $\Sc_r$, some services must be {\em pushed-down} to make room for $r$.
Hence, if $r$'s highest delay-feasible datacenter $\bar{s}_r$ cannot host $r$, then $\bar{s}_r$ initiates the push-down stage.
The initialization of the push-down procedure includes the following pieces of information: 
\begin{inparaenum}[(i)]
\item A list of push-down requests, and \item {\em deficitCPU}, namely, the amount of CPU resources that must be freed (i.e., pushed-down) from $\bar{s}_r$ to find a feasible solution. 

The push-down stage runs in $\bar{s}_r$'s sub-tree in a depth-first-search manner, until enough CPU is freed from $\bar{s}_r$. After freeing enough CPU, the push-down stage finds a feasible solution and finishes. 
If the push-down stage fails to free enough CPU from $\bar{s}_r$ even after scanning all $\bar{s}_r$'s sub-tree, the push-down procedure initiated by $\bar{s}_r$ terminates with a failure.
\end{inparaenum}

\subsection{Implementing \alg\label{sub:algo}}
In what follows, we describe how each stage is implemented through specific algorithmic procedures and define the protocol that allows the entities running such procedures to exchange the necessary information.

As our solution is distributed and asynchronous, several parallel simultaneous runs of each procedure -- namely, \bu(), \pu(), and \pd() -- may exist in the system.
To handle asynchronism, we prudently consider the different simultaneous runs of the algorithm on distinct nodes. 
In particular, we let $s$.proc() denote a run of procedure proc() on datacenter $s$. 

A procedure running on a certain datacenter may call another procedure in the same datacenter, e.g.,  $s$.proc1() may call $s$.proc2(). In addition, a procedure running on datacenter $s$ may call a procedure in {\em another} datacenter, $s'$, e.g.,  $s$.proc1() may call $s'$.proc1(), or even $s'$.proc2(). Such a call is manifested by sending a message from $s$ to $s'$. 

We detail and discuss the communication overhead in Sec.~\ref{sec:alg:assure_converge}. %

When multiple nodes run algorithms simultaneously, a resource competition occurs. We resolve this by prioritizing the requests as follows.
Observe that $S_r \setminus \Tc(s)$ represents the set of all higher-level datacenters on which request $r$ can be pushed up from $s$, while satisfying $r$'s delay constraints. 
Now $|S_r \setminus \Tc(s)|$ corresponds to the number of higher-level datacenters where $r$ can be placed: a smaller value corresponds to a more time-constrained request to place.
In the pseudo-codes of the algorithms, we use the procedure {\em Sort()} that sorts requests in increasing order of $\abs{S_r \setminus \Tc(s)}$, namely, from the most
time-constrained to the least one.
{\em Sort()} breaks ties by increasing $\beta^{r,s}$ and breaks further ties by new users last, in order to preserve already running services.

As our approach is distributed, each datacenter $s$ maintains its local variables, denoted by a sub-script $s$.  
In particular, we denote by $\notAssignedOfS$ the list of currently unassigned requests to be handled by datacenter $s$.
Further, we denote by $\pushUpSetOfS$ the list of requests assigned in 
$s$'s sub-tree, that 
may be pushed-up to an ancestor of $s$, to reduce costs.  
$\pushDownSetOfS$ denotes a list of push-down requests, with respect to $s$. Finally,  
$a_s$ denotes the available capacity on datacenter $s$, i.e., the capacity in $s$ that is not yet deployed or assigned to any service.

Upon system initialization, each datacenter $s$ resets
its local above three lists: 
$\notAssignedOfS{=}\pushUpSetOfS{=}\pushDownSetOfS{=}\emptyset$. In addition, each datacenter initially assigns the amount of available CPU to its computational capacity, namely, $a_s{=}C_s$.
Upon initialization, each datacenter $s$ resets a flag named within\_PD, which will be set if $s$ is currently running a \pd\ (push-down) procedure. 
If a service is migrated, the datacenter that hosted the migrating service automatically de-assigns (namely, de-allocates) the resources previously used by the migrating service.

In the following, we present how each stage is implemented.

\textbf{SFS stage implementation.} The SFS procedure is detailed in 
Alg.~\ref{alg:bu}. 
The procedure gets as input two lists: $\notAssigned$ -- a list of requests that are not assigned yet;
and $\pushUpSet$ -- a list of requests that are assigned,
but may be pushed-up to a higher datacenter, to reduce costs. 
In Line~\ref{alg:bu:sort3},  
\bu() adds the input list of not-assigned requests $\notAssigned$ to the local list of not-assigned requests $\notAssignedOfS$, and sorts $\notAssignedOfS$. 
The procedure does not sort the push-up requests because we only collect push-up requests; all the collected requests will be sorted later, by the \pu() procedure.
In Line~\ref{alg:bu:init_PD_set}, $s$ initializes the list of push-down requests to an empty list.

\bu() considers the un-assigned services as follows (Lines~\ref{alg:bu:for_begin}--\ref{alg:bu:for_end}). If the locally available capacity, denoted $a_s$, suffices to locally place an unassigned service $\mv^r$ (Line~\ref{alg:bu:if_enough_avail_CPU}), $s$ reserves capacity for $\mv^r$ (Line~\ref{alg:bu:dec_a_s}). If $\mv^r$ cannot be placed higher in the tree (Line~\ref{alg:bu:if_must_place}), \bu() not only assigns $\mv^r$, but also locally places it (Line~\ref{alg:bu:place}). Otherwise (namely, if $s$ does not have enough resources for $\mv^r$, but $\mv^r$ may be placed on an ancestor), the procedure inserts $\mv^r$ to the list of push-up requests   (Line~\ref{alg:bu:insert_to_push_up_list}). Later, $s$ will propagate these push-up requests to its parent. 
If \bu() fails to place a request $r$ that cannot be placed higher, $s$ adds $r$ to the push-down list (Lines~\ref{alg:bu:if_failed_begin}--\ref{alg:bu:if_failed_end}).
Note that the loop discards requests for which $\alpha_s<\beta^{r,s}$ and $\bar{s} \neq s$ (namely, $s$ does not have enough resources for $\mv^r$, but $\mv^r$ may be placed on an ancestor). Such requests will stay in $\notAssignedOfS$, and thus later be propagated to $s$'s parent (Line~\ref{alg:bu:call_prnt}).
After considering all the unassigned services, $s$ considers  push-down requests, if any  (Lines~\ref{alg:bu:if_pd_set_not_empty_begin}-\ref{alg:bu:if_pd_set_not_empty_end}). 
If they exist, then $s$ calls the push-down procedure as follows. First, $s$ calculates the amount of CPU that should be freed to find a feasible solution. This amount, denoted by \deficitCpu, is calculated as the gap between the amount of CPU required to place all the push-down requests, and the currently available CPU, $a_s$ (Line~\ref{alg:bu:set_defCpu}). Then $s$ initiates the \pd() procedure, with $\pushDownSetOfS$ and \deficitCpu\ as arguments (Line~\ref{alg:bu:call_pd}). 
If $s$ has no pending push-down requests (Line~\ref{alg:bu:if_pd_set_not_empty_else}),  
\bu() assigns the list of push-up requests to be sent to $s_{\textrm{prnt}}$, the parent of $s$. 

This list includes all the push-up requests that are (i) known to $s$, and (ii) can be deployed on $s_{\textrm{prnt}}$ while satisfying the latency constraint (Line~\ref{alg:bu:set_puList_to_prnt}).
Later, \bu() checks whether there exist services that are not yet assigned, or can be pushed-up to an ancestor (Line~\ref{alg:bu:if_should_call_prnt}).

If so, \bu() initiates a run of \bu() on $s$'s parent (Line~\ref{alg:bu:call_prnt}). 
Finally, if there are no pending push-up requests from its ancestors, $s$ initiates a PU procedure (Lines~\ref{alg:bu:if_no_pnding_pu_from_prnt}--\ref{alg:bu:init_pu}).

\begin{algorithm}[tb]
								 
\caption{\bu ($ \notAssigned, \pushUpSet$)}
 \algFontSize
\label{alg:bu}
\algFontSize
\begin{algorithmic}[1]
\State {$s \gets $ local datacenter id}
\label{alg:bu:set_s}
\State $\pushUpSetOfS \leftarrow \pushUpSetOfS \cup \pushUpSet$
\label{alg:bu:init_push_up_set}
\State $\notAssignedOfS \leftarrow$ Sort ($\notAssignedOfS \cup \notAssigned$) 
\label{alg:bu:sort3}
\State $\pushDownSetOfS = \emptyset$ 
\label{alg:bu:init_PD_set}
\ForEach {$\mv^r \in \notAssignedOfS$}
    \label{alg:bu:for_begin}
    \If {$a_s \geq \beta^{r,s}$}
    \Comment{enough available CPU to place $\mv^r$ on $s$}
    \label{alg:bu:if_enough_avail_CPU}
    \State remove $\mv^r$ from $\notAssignedOfS$
    \label{alg:bu:pup_from_Hcomp}
    \State $a_s \leftarrow a_s - \beta^{r,s}$        
    \Comment{assign $\mv^r$ on $s$}
    \label{alg:bu:dec_a_s}
    \If {$\bar{s}_r = s$} 
    \Comment{check if $\mv^r$ cannot be placed higher}
      \label{alg:bu:if_must_place}
        \State place $\mv^r$ on $s$
        \label{alg:bu:place}
    \Else
    \Comment{Will place $\mv^r$ on $s$ only if it won't be pushed-up}
        \State         $\pushUpSetOfS \leftarrow  \pushUpSetOfS\cup \set{r}$\Comment{Insert $r$ to $ \pushUpSetOfS$}
        \label{alg:bu:insert_to_push_up_list}
    \EndIf
    \ElsIf {$\bar{s}_r = s$} 
        \label{alg:bu:if_failed_begin}
        \Comment{if $\mv^r$ can't be placed neither here nor on an ancestor}
        \State $\pushDownSetOfS \leftarrow  \pushDownSetOfS\cup \set{r}$\Comment{Insert $r$ to $ \pushDownSetOfS$}
        \label{alg:bu:insert_to_pd_list}
    \EndIf \label{alg:bu:if_failed_end}
\EndFor \label{alg:bu:for_end}
   \If{$\pushDownSetOfS\neq\emptyset$}
   \label{alg:bu:if_pd_set_not_empty_begin}
        \State $\deficitCpuMath = \sum_{r \in \pushDownSetOfS} \beta^{r,s} - a_s$
        \label{alg:bu:set_defCpu}
        \Comment{capacity to free for finding a sol}
        \State run $\pdMath\ (\pushDownSetOfS, \deficitCpuMath)$ 
        \label{alg:bu:call_pd}
                    \Comment{Push-down services to make room for over-provisioned services}
       \label{alg:bu:if_pd_set_not_empty_end}
\Else \label{alg:bu:if_pd_set_not_empty_else}
    \State $\pushUpSet_{\textrm{prnt}} \leftarrow$ all requests in $\pushUpSetOfS$ for which $s_{\textrm{prnt}}$ is delay-feasible
    \label{alg:bu:set_puList_to_prnt}
    \If {$\notAssignedOfS \neq \emptyset$ \OR $\pushUpSet_{\textrm{prnt}} \neq \emptyset$}
        \Comment{check if there is any req to send to parent}
        \label{alg:bu:if_should_call_prnt}
        \State{send \Big($s$.\textrm{prnt}, \bu \big($\notAssignedOfS, \pushUpSet_{\textrm{prnt}}$\big)\Big)}
        \label{alg:bu:call_prnt}
        \State $\pushUpSet_s = \pushUpSet_s \setminus \pushUpSet_{\textrm{prnt}}$
    \EndIf        
    \If {$\pushUpSet_{\textrm{prnt}} = \emptyset$} 
        \Comment{Check if there are no pending PU replies from parent}
        \label{alg:bu:if_no_pnding_pu_from_prnt}
        \State run \pu $\left(\pushUpSetOfS \right)$
        \label{alg:bu:init_pu}
    \EndIf
\EndIf
\end{algorithmic}
\end{algorithm}

\begin{algorithm}[tb!]
\caption{\pu ($\pushUpSetOfS$)}
 \algFontSize
\label{alg:pu}
\begin{algorithmic}[1]
\State {$s \gets $ local datacenter id}
    \State de-assign all services previously pushed-up from $s$, and update $a_s$ and $\pushUpSetOfS$ accordingly
    \label{alg:pu:displace}
    \State $\pushUpSetOfS \leftarrow$ Sort ($\pushUpSetOfS$)\Comment{Sort by latency 
    constraints
    }%
    \label{alg:pu:sort}
    \ForEach {push-up request $r$ in $\pushUpSetOfS$}
        \label{alg:pu:for_pu_begin}
        \If {$a_s \geq \beta^{r,s}$}
        \label{alg:pu:enough_as}
        \Comment{if enough CPU to place this service}
        \State {place $\mv^r$ on $s$ and update $a_s$ accordingly}
        \label{alg:pu:pu}

        \label{alg:pu:inform}
        \EndIf
        \State set a pos-ack for $r$ if it was pushed-up, neg-ack else
    \EndFor    
    \label{alg:pu:for_pu_end}
   
    \ForEach{child $c$}\label{alg:pu:for_each_child_begin}
        \State $\pushUpSet_c \gets$ all push-up requests originated from  $c$ subtree
        \label{alg:pu:associate_puReq_to_child}
        \If{$\pushUpSet_c \neq \emptyset$}
            \label{alg:pu:forEach_child_if_begin}
            \Comment{if there are requests pushed-up from $c$'s sub-tree}
            \State send \Big($c$, \pu \big($\pushUpSet_c\big)\Big)$
        \EndIf
        \label{alg:pu:forEach_child_if_end}
    \EndFor\label{alg:pu:for_each_child_end}
\end{algorithmic}
\end{algorithm}

\begin{algorithm}[tb!] 					   
\caption{Process PD message($\pushDownSetOfS$,$\deficitCpuMath$) }
\algFontSize
\label{alg:pd}
\begin{algorithmic}[1]
\State{$s\gets$ local 	datacenter id}  
\Comment{$s$ called by \pd ($\pushDownSetOfS, \deficitCpuMath)$}
\label{alg:pd:read_fields_begin}
    \If {within\_PD}\Comment{Note that flag is initially set false}
        \label{alg:pd:if_within_another_begin}
        \State send the caller neg-ack for all the requests in $\pushDownSetOfS$
        \State \Return
        \Comment{cannot handle more than one \pd() request simultaneously}
    \label{alg:pd:if_within_another_end}
    \EndIf
    \State within\_PD = True
    \State add all the requests assigned but not placed on $s$ to the end of
    $\pushDownSetOfS$
    \label{alg:pd:add_pot_placed}
    \State add all the requests placed on $s$ to the end of     $\pushDownSetOfS$
    \label{alg:pd:add_placed}
    \Comment{Prioritize sols with fewer mig.}
    \ForEach {child $c$} 
        \label{alg:pd:foreach_child_begin}
        \If {can place on $s$ enough services from $\pushDownSetOfS$ to make deficitCpu $\leq 0$}
        \label{alg:pd:if_can_break_next_child_begin}
            \State break \label{alg:pd:if_can_break_next_child}
        \EndIf
        \label{alg:pd:if_can_break_next_child_end}
        \State $\pushDownSet_c \gets$ requests in $\pushUpSetOfS$ originated from  $c$  
        \If {$\pushDownSet_c \neq \emptyset$}
            \label{alg:pd:call_child_begin}
            \Comment{if child $c$ may help in freeing space}
            \State send \big($c, \pdMath$  ($\pushDownSet_c, \deficitCpuMath$)\big)
            \Comment{request $c$ and get ack}
            \State de-assign services pushed-down from $s$; update $a_s$,  $\pushDownSetOfS, \notAssignedOfS$, \deficitCpu
            \label{alg:pd:update_by_pushed_dwn_chains_from_me}
        \EndIf
        \label{alg:pd:call_child_end}
    \EndFor
    \label{alg:pd:foreach_child_end}
    \ForEach {$r \in \pushDownSetOfS$ s.t. $\beta^{r,s} \leq a_s$}
    \label{alg:pd:pd_to_myself_begin}
        \State {place $\mv^r$ and update $a_s$, $\pushDownSetOfS$ and deficitCpu accordingly}
        \label{alg:pd:pd_to_myself_end}
    \EndFor
    \If {$s$ is not the initiator of this \pd} 
    \label{alg:pd:if_need_callPrnt}
        \State send \big($s$.parent ($\pushDownSetOfS, \deficitCpuMath$)\big)
        \label{alg:pd:callPrnt}
    \EndIf
    \State run $s$.\bu($\notAssignedOfS$) in $F$-mode\Comment{Run SFS on un-assigned requests}
    \label{alg:pd:run_f_sfs}
    \State within\_PD = False
\end{algorithmic}
\end{algorithm}

\textbf{PU stage implementation.} The push-up procedure \pu() is detailed in Alg.~\ref{alg:pu}. The procedure first {deassigns}
 and regains the CPU resources for all the services pushed-up from $s$ to a higher-level datacenter (Line~\ref{alg:pu:displace}).
In Lines~\ref{alg:pu:for_pu_begin}--\ref{alg:pu:for_pu_end}, \pu() considers all the push-up requests as follows. Consider a request to push up service $\mv^r$, currently placed on datacenter $s'$, a descendent of $s$. If $s$ has enough available capacity for that request, then \pu() locally places $\mv^r$ and updates the relevant record in $\pushUpSetOfS$ (Lines~\ref{alg:pu:enough_as}--\ref{alg:pu:pu}). This record will later be propagated to $s'$, which will identify that $\mv^r$ was pushed up, and regain the resources allocated for it. 
Lines~\ref{alg:pu:for_each_child_begin}--\ref{alg:pu:for_each_child_end} state that $s$ propagates the push-up requests to its children. To reduce communication overhead, the procedure associates each record in $\pushUpSetOfS$ with a single child $c$ that is latency-feasible for the service in question (Line~\ref{alg:pu:associate_puReq_to_child}), and propagates to each child only the relevant requests, if any exist (Lines~\ref{alg:pu:forEach_child_if_begin}--\ref{alg:pu:forEach_child_if_end}).

\textbf{PD stage implementation.} 
The push-down procedure, \pd(), runs the same when either a parent calls its child, or vice versa.  
\pd() runs in a depth-first-search manner in the sub-tree rooted by the datacenter that initiated it. As \alg\ is asynchronous, several instances of \pd() may run simultaneously. 
To assure convergence and reduce the algorithm's overhead, we let each datacenter $s$ take part in, at most, a single instance of \pd() at any time. This is enforced by introducing a flag named {\em within\_PD}, which is set when a \pd() is active. 
If \pd() calls $s$ when within\_PD is set, $s$ replies with a negative acknowledgment for all the push-down requests. 

Recall that \deficitCpu\ denotes the amount of resources that should be pushed-down from the datacenter that initiated the \pd() procedure to find a feasible solution.
Each time a service is pushed-down from the datacenter that initiated \pd(), \deficitCpu\ is decremented accordingly. A run of \pd() is terminated once \deficitCpu\ is not positive anymore, indicating that enough resources were freed. Due to the granularity of the services' CPU consumption, 
when \pd() terminates, \deficitCpu\ may be either zero or negative, namely, \pd() terminates when \deficitCpu$\leq 0$. 

The processing of the messages in \pd() is detailed in Alg.~\ref{alg:pd}.
Lines~\ref{alg:pd:if_within_another_begin}--\ref{alg:pd:if_within_another_end} handle the particular case where \pd() is invoked while $s$ already takes part in another run of \pd(). In such a case, the procedure merely sends neg-acks for all the push-down requests and returns.
In Lines~\ref{alg:pd:add_pot_placed}--\ref{alg:pd:add_placed}, \pd() 
adds to the list of requests $\pushDownSetOfS$ the locally assigned services. To reduce the number of migrations, $s$ adds the services that are not merely assigned but already really placed to the end of $\pushDownSetOfS$
(Line~\ref{alg:pd:add_placed}). Thus, the procedure will consider migrating services only if it does not find a solution without migrating already-placed services. 

In Lines~\ref{alg:pd:foreach_child_begin}--\ref{alg:pd:foreach_child_end}, $s$ serially requests its children to push-down services. 
Before each such call to a child, $s$ checks whether making \deficitCpu\ $\leq 0$ is possible without calling an additional child. If the answer is positive, $s$ skips calling its children (Lines~\ref{alg:pd:if_can_break_next_child_begin}--\ref{alg:pd:if_can_break_next_child_end}). Upon receiving a reply from child $c$, $s$ updates its state variables and \deficitCpu\ accordingly (Lines~\ref{alg:pd:call_child_begin}--\ref{alg:pd:call_child_end}).
In Lines~\ref{alg:pd:pd_to_myself_begin}--\ref{alg:pd:pd_to_myself_end}, $s$ tries to locally host services from the push-down list, $\pushDownSetOfS$. 
Later, if $s$ is not the initiator of this push-down procedure, it replies to its parent (Lines~\ref{alg:pd:if_need_callPrnt}--\ref{alg:pd:callPrnt}). 
Finally, if $s$ has services that are not assigned yet, 
\pd() calls \buFMode(), a degenerated version of \bu() that focuses on finding a feasible solution; we detail  \buFMode() in Sec.~\ref{sec:alg:assure_converge}.

\begin{figure*}
    \centering
    \includegraphics[width=2\columnwidth]{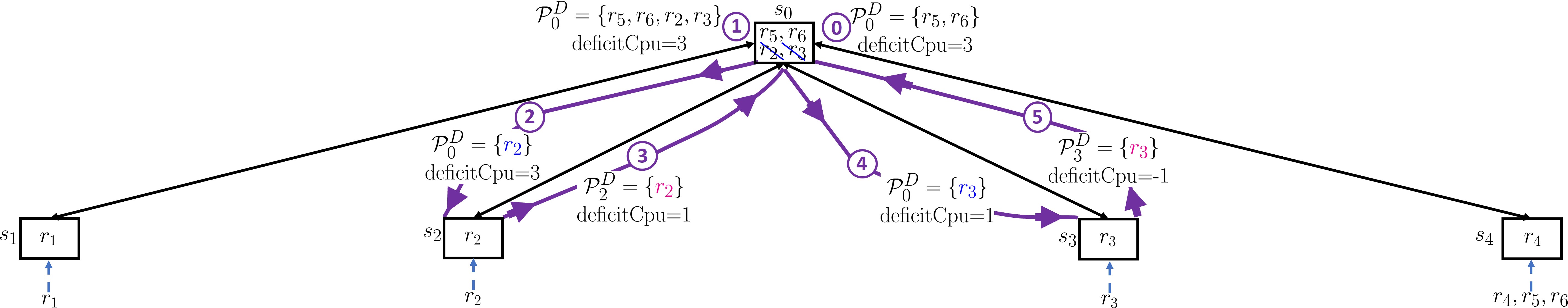}
        \caption{Run example of the \pd\ procedure. The circle denotes the step $\in\{0,5\}$.}
        \label{fig:pd_toy_example}
\end{figure*}

\example{}{Fig.~\ref{fig:pd_toy_example} exemplifies a run of \pd(). 
Each request $r$ can be placed either on its PoA or on the root $s_0$, with a fixed required capacity $\beta^{r,s}{=}2$.
The normalized CPU capacities of the datacenters are $C_0{=}5, C_1{=}C_2{=}C_3{=}C_4{=}2$. 
At the beginning (Step 0), service $\mv^{r_1}$ is placed on $s_1$; services $\mv^{r_2}$ and $\mv^{r_3}$ are placed on $s_0$; and service $\mv^{r_4}$ is placed on $s_4$. 
When the new requests $r_5, r_6$ arrive, their latency-feasible datacenters (namely, $s_4$ and $s_0$) do not have sufficient available capacity. 
In particular, when $s_0$.\bu() runs with $\notAssigned_0 {=} \set{r_5, r_6}$, there is not enough capacity for $r_5$. As there is no ancestor of $s_0$ on which $r_5$ can be placed, $s_0.$\bu() inserts $r_5$ into $\pushDownSetOfS$ (Line~\ref{alg:bu:insert_to_pd_list} in Alg.~\ref{alg:bu}). When reaching Line~\ref{alg:bu:insert_to_pd_list} in the next iteration of the for loop, $s$ will fail to assign also $r_6$ and, thus, insert $r_6$ to the push-down list.
Next, $s$ sets \deficitCpu\ {=} $\sum_{r {\in} \pushDownSet_{s_0}} \beta^{r,s} {-} a_{s_0} {=} 4-1 {=} 3 $ (ln.~\ref{alg:bu:set_defCpu} in Alg.~\ref{alg:bu}); and then calls  the \pd() procedure (Line~\ref{alg:bu:call_pd} in Alg.~\ref{alg:bu}). 

Step 1 in Fig.~\ref{fig:pd_toy_example} displays the run of $s_0$.\pd(). 
The procedure adds to $\pushDownSet_0$ all the services that are currently assigned to $s_0$, namely, $r_2$ and $r_3$ (Lines~\ref{alg:pd:add_pot_placed}--\ref{alg:pd:add_placed} in Alg.~\ref{alg:pd}). Pushing-down such services to descendants of $s_0$ may free capacity, that $s_0$ may use later to place $r_5$ and/or $r_6$.
Then (Lines~\ref{alg:pd:foreach_child_begin}--\ref{alg:pd:foreach_child_end}), $s_0$ considers calling its children. The child $s_1$ is not a latency-feasible datacenter for any of the services in $\pushDownSet_0$; hence, $s_0$.\pd() does not call it. 

Step 2 in Fig.~\ref{fig:pd_toy_example} captures how  $s_0$.\pd() calls $s_2$.\pd() with $\pushDownSet_0 {=}\set{r_2}$ and \deficitCpu{=}3. As a result, $s_2$.\pd() pushes-down $r_2$, and updates \deficitCpu\ accordingly (Line~\ref{alg:pd:pd_to_myself_end} in Alg.~\ref{alg:pd}). That is, $s_2$.\pd() decreases \deficitCpu\ by the amount of CPU freed from $s_0$ by pushing-down $s_2$: \deficitCpu\ ${=} 3 {-} 2 {=} 1$. Then, $s_2$.\pd() sends a push-down Ack message to its parent, $s_0$ (Line~\ref{alg:pd:callPrnt} in Alg.~\ref{alg:pd}). This call is depicted in Step~3 in Fig.~\ref{fig:pd_toy_example}.

In Step 4, $s_0$.\pd() calls $s_3$ with $\pushDownSet_0 {=}\set{r_3}$ and \deficitCpu=1. 
Consequently, $s_3$.\pd() pushes-down $r_3$, and updates \deficitCpu\ again: \deficitCpu\ ${=}1 {-} 2 {=} {-}1$. 
The negative value assigned to \deficitCpu\ reflects the fact that not only  pushing-down services $r_2$ and $r_3$ to from $s_0$ frees enough CPU capacity in $s_0$,
but also after placing $r_5$ and $r_6$ (the ``problematic'' requests that initiated the \pd() procedure  the first time), $s_0$ will still have 1 unit of capacity available.

In Step 5, $s_3$.\pd() sends an ack message to its parent, $s_0$. 
As $s_0$.\pd() realizes that  \deficitCpu\ ${\leq} 0$, it breaks the loop (Lines~\ref{alg:pd:if_can_break_next_child_begin}--\ref{alg:pd:if_can_break_next_child_end} in Alg.~\ref{alg:pd}). 
Next, $s_0$ ``pushes-down'' to itself the problematic services $r_5$ and 
$r_6$ (Lines~\ref{alg:pd:pd_to_myself_begin}--\ref{alg:pd:pd_to_myself_end}  in Alg.~\ref{alg:pd}), and calls $F$-\bu() (Line~\ref{alg:pd:run_f_sfs}  in Alg.~\ref{alg:pd}). 
This completes the run of the push-down procedure.}

\subsection{Assuring convergence}\label{sec:alg:assure_converge}

Intuitively, a run of \pd() indicates that a recent run of \bu() -- either in $s$, or in an ancestor of $s$ -- failed, and hence \pd() is called. In such circumstances, pushing-up services again may result in a non-convergence, manifested as endless oscillations of pushing-up and pushing-down the same services. 
As an illustration of such possible oscillations, consider again the scenario depicted in Fig.~\ref{fig:need_bu_problem}. To successfully place the new request $r_3$, the algorithm would push-down $r_1$ from $s_2$ to $s_5$, thus reaching the state depicted in Fig.~\ref{fig:need_bu_sol}. Now, assume that the service request $r_3$ finishes, thus leaving $s_2$ totally free, as depicted in Fig.~\ref{fig:need_bu_r_3_finish}. To reduce the cost, the algorithm may push-up $r_1$ back to $s_2$, as depicted in Fig.~\ref{fig:need_bu_after_pu}. Now, when a new service request, $r_4$, arrives at PoA $s_6$, the algorithm reaches the state depicted in Fig.~\ref{fig:need_bu_fail_again}, which is actually isomorphic to the state in Fig.~\ref{fig:need_bu_problem}; in particular, to place $r_4$, the algorithm has to push-down $r_1$ again, back to $s_5$. 
Such oscillations cause multiple unnecessary migrations and communication overhead.

To avoid such oscillations, our algorithm deters new push-ups of services shortly after pushing-down services. 
In more detail, we define a Feasibility {\em F}-protocol mode. Each time $s$ calls \pd(), $s$ enters $F$-mode (if it was not already in $F$-mode), and remains so for some pre-configured {\em F-mode period}.
While in $F$-mode, \alg\ does not initiate new push-up requests. However, while in $F$-mode, the algorithm still replies to existing push-up requests with the minimum necessary details that the caller node is waiting to receive, to prevent deadlocks.

Alg.~\ref{alg:buFMode} details \buFMode(), which is the $F$-mode equivalent of \bu(). The procedure is a degenerated version of \bu() that places every service and does not initiate new push-up requests. 
If buFMode() fails to place a request even after calling \pd() (Line~\ref{alg:buFMode:if_failure}), calling \pd() again may result in an infinite loop of trying to push down services and seek a feasible solution while such a solution does not exist. Hence, instead of calling \pd() again, the procedure terminates \alg\ with a failure (Line~\ref{alg:buFMode:failure}).\footnote{This failure can be avoided if enough resources are made available.} 

\buFMode() does not initiate new push-up requests. However, to prevent deadlocks, the procedure considers existing push-up requests. This is done by calling a degenerated version of the push-up procedure, named F-\pu()  (Line~\ref{alg:buFMode:call_puFmode}). $F$-\pu() is identical to \pu(), except for the following difference: instead of handling the push-up requests (Line~\ref{alg:pu:for_pu_begin}--\ref{alg:pu:for_pu_end}), $F$-\pu() merely sets neg-acks for all the push-up requests.

We now show that once there are no new requests, the algorithm converges in a finite time.
\begin{theorem}\label{thm:converge}
Let $\Rc^*$ denote the set of requests that are not placed yet at some point in time. If there are no new requests, then after exchanging $O \left(\abs{\Rc^*} \cdot \abs{\Sc} \right)$ messages, the protocol either fails or finds a feasible solution.
\end{theorem}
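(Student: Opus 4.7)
The plan is to charge the $O(\abs{\Rc^*}\cdot\abs{\Sc})$ message budget to individual requests in $\Rc^*$ by decomposing traffic across the three stages \bu(), \pu(), and \pd(). I will rely on three invariants. First, $F$-mode, entered whenever a datacenter calls \pd(), forbids initiating fresh push-up requests for an $F$-mode period, thereby eliminating the oscillation illustrated in Fig.~\ref{fig:need_bu_fail_again}. Second, the \textit{within\_PD} flag guarantees that at most one \pd() instance is active at any datacenter at any time, so concurrent push-downs are serialized via the immediate neg-acks issued in Lines~\ref{alg:pd:if_within_another_begin}--\ref{alg:pd:if_within_another_end} of Alg.~\ref{alg:pd}. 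Third, every ``vertical'' traversal along a leaf-to-root path costs at most $D(\Gc)\le\abs{\Sc}$ messages.

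First I would bound the cost of the SFS stage. For each $r\in\Rc^*$, Alg.~\ref{alg:bu} forwards $r$ upward by at most one hop per datacenter along the path from its PoA to $\bar{s}_r$ (Line~\ref{alg:bu:call_prnt}), and $r$ is removed from $\notAssignedOfS$ as soon as it is assigned or inserted into $\pushDownSetOfS$. Hence $r$ contributes at most $O(\abs{\Sc})$ SFS messages, summing to $O(\abs{\Rc^*}\cdot\abs{\Sc})$. The same charging works for PU: a push-up request for $r$ travels top-down from $\bar{s}_r$ toward the datacenter currently hosting $r$, and Lines~\ref{alg:pu:for_each_child_begin}--\ref{alg:pu:for_each_child_end} of Alg.~\ref{alg:pu} associate each push-up record with a single latency-feasible child, so the traffic attributable to $r$ is again $O(\abs{\Sc})$.

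Next I would treat PD. Each call $s.$\pd() performs a DFS inside $\Tc(s)$, using $O(\abs{\Tc(s)})=O(\abs{\Sc})$ messages per invocation. A \pd() on $r$ is initiated only if SFS fails at $\bar{s}_r$, and once the subsequent $F$-mode period begins, $r$ cannot re-enter the push-up/push-down cycle: by Alg.~\ref{alg:buFMode}, \buFMode() either places $r$ or terminates \alg\ with failure (Line~\ref{alg:buFMode:failure}) without recursively re-invoking \pd(). Thus each $r\in\Rc^*$ is charged at most one \pd() DFS, and \textit{within\_PD} ensures that concurrent PoA-initiated push-downs contribute only a bounded number of neg-acks per visited datacenter.

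The main obstacle will be formalizing the interaction of concurrent asynchronous runs spawned by different PoAs: I must argue that two simultaneous PU or PD streams cannot repeatedly regenerate work on the same datacenter. I would handle this by defining a potential function that sums, over pending requests, their remaining hop-distance to $\bar{s}_r$ plus indicators for unresolved push-up and push-down tokens, and then verifying that every transmitted message either strictly decreases this potential or is one of finitely many neg-acks forced by the \textit{within\_PD} guard. Combining this potential argument with the per-request charging above yields the claimed $O(\abs{\Rc^*}\cdot\abs{\Sc})$ total and implies termination in either success or failure.
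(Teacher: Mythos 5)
Your proposal is correct and follows essentially the same route as the paper's proof: charge each request in $\Rc^*$ with at most one run of each stage, bound \bu()/\pu() by $O(D(\Gc))$ messages along a leaf-to-root path and \pd() by an $O(\abs{\Sc})$ depth-first traversal of the initiator's subtree, and sum to $O(\abs{\Rc^*}\cdot\abs{\Sc})$. The paper's version is terser---it simply asserts that each completed run either fails or places one more request---whereas you additionally flag the concurrent-runs issue and sketch a potential-function argument, which goes beyond what the published proof actually spells out but does not change the approach.
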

\begin{proof}
In the worst case, each request $r \in \Rc^*$ initiates a distinct run of \alg, and each such run involves running \bu(), \pu(), and \pd().
A single run of \bu() or \pu() exchanges $O(D(\Gc))$ messages, where $D(\Gc)$ is the diameter of the network graph $\Gc$ -- from the PoA and upwards when running \bu(), or vice versa when running \pu(). 

The initiator of the \pd\ procedure is $r$'s top latency-feasible datacenter, $\bar{s}_r$ (Lines~\ref{alg:bu:if_failed_begin}--\ref{alg:bu:if_failed_end} in Alg.~\ref{alg:bu}). 
In the worst-case, $\bar{s}_r$ is the root datacenter, and the run of \pd\ involves sending messages down to all the leaves datacenters and back up to the root -- namely, $O \left(\abs{\Sc}\right)$ messages. 
After this run, the algorithm either fails or successfully places at least one additional request. To conclude, the protocol either fails or finds a feasible solution after sending $O\left(\abs{\Rc^*} \cdot (D(\Gc) {+} \abs{\Sc})\right) {=} O\left(\abs{\Rc^*} \cdot \abs{\Sc}\right)$.
\end{proof}

Although the upper bound on the number of messages required for convergence in  Theorem~\ref{thm:converge} is loose, in Sec.~\ref{sec:reduce_commoh} we show how to dramatically decrease the overhead of \alg\ in practical cases.

\begin{algorithm}[tb!]
\caption{\buFMode ($\notAssigned, \pushUpSet$)}
\algFontSize
\label{alg:buFMode}
\begin{algorithmic}[1]
    \State run Lines~\ref{alg:bu:set_s}-\ref{alg:bu:init_PD_set} of Alg.~\ref{alg:bu}
    \ForEach {$\mv^r \in \notAssignedOfS$}
        \State remove $\mv^r$ from $\notAssignedOfS$
        \If {$a_s \geq \beta^{r,s}$}
        \Comment{if there is enough available CPU to place $\mv^r$ on $s$}
            \State place $\mv^r$ and decrease $a_s$ accordingly
        \Else
            \If{$\bar{s}_r = s$} 
                \label{alg:buFMode:if_cannot_place_higher}
                \Comment{if cannot place this request on a higher datacenter}
                \If{within\_\pd==False} \Comment {if didn't try yet to push-down}
                    \State $\pushDownSetOfS \leftarrow \pushDownSetOfS \cup \set{r}$
                    \label{alg:buFMode:set_pdList}
                    \Comment{services to push-down}
                \Else \label{alg:buFMode:if_failure}
                    \State \Return FAILURE  
                    \label{alg:buFMode:failure}
                \EndIf
            \EndIf
            \label{alg:buFMode:if_cannot_place_higher_endif}
        \EndIf
    \EndFor
   \If{$\pushDownSetOfS\neq\emptyset$}
   \label{alg:buFMode:if_pd_set_not_empty_begin}
        \State $\deficitCpuMath = \sum_{r \in \pushDownSetOfS} \beta^{r,s} - a_s$
        \label{alg:buFMode:set_defCpu}
        \Comment{capacity to free for finding a sol}
        \State run $\pdMath\ (\pushDownSetOfS, \deficitCpuMath)$ 
        \label{alg:buFMode:call_pd}
                    \Comment{Push-down services}
       \label{alg:buFMode:if_pd_set_not_empty_end}    
    \EndIf
    \If {$\notAssignedOfS \neq \emptyset$} 
        \State{send \Big($s$.parent, \bu  \big($\notAssignedOfS, \emptyset$\big)\Big)}
    \EndIf
    \State {\bf run} $s$.F-\pu($\pushUpSetOfS$)
\label{alg:buFMode:call_puFmode}
\end{algorithmic}
\end{algorithm}

\ignore{
\itamar{Comment: "Their algorithm is straightforward and without any theoretical analysis of the performance of their algorithm, it is difficult to judge how well their algorithm performs."
I suggest addressing this comment by inserting back the subsection below, which analyzes the communication overhead; we had in earlier versions. Admittedly, it's not very sophisticated. However, it still adds some desired formal analysis. Furthermore, if we insert this analysis, we can add to the evaluation section a short discussion that compares the obtained communication overhead with the theoretical analysis.
}
\paolo{I do not like the following section. The overhead in terms of bits is trivial to be evaluated, given a set of fields. It is just $\log_2 X$ bits for each field, where $X$ is the number of options for a field. The evaluation below does not seem correct since it does not use this, and we should avoid to say something very questionable and very weak. It would be more interesting to draw the message format, with all the required fields, independently from their representation (note that  a very smart protocol could be also compress all the fields using, e.g., differential encoding.). In the evaluation part, table III is enough, since enough vague about the encoding scheme we used.}
\paolo{SInce the following section has been removed, I am fine and we can remove the comments}
    \subsection{Analyzing \alg's communication overhead}
    
    Theorem~\ref{thm:converge} upper-bounds \alg's packet count. 
    In what follows, we define some additional notation and then use it to gauge \alg's overall bit count. 
    
    Given a packet $P$ of \alg, let $\abs{P}$ denote $P$'s size in bits. Observe that \alg\ uses only single-hop packets (either from a child to its parent, or vice versa). Hence, we suppose that there is a fixed header size $H$. 
    Let $\Mc(P)$ denote the number of MS records in $P$, namely, MSs that appear in any of the lists $\notAssigned$, $\pushUpSet$, or $\pushDownSet$. 
    Let $\textrm{dcId}$ and $\textrm{MSId}$ denote the fields in \alg's packet representing a datacenter Id and an MS instance Id, respectively. 
    We consider that each MS belongs to a concrete {\em class} of timing constraint. Given the classId of request $r$, each datacenter $s$ can calculate the amount of CPU required for placing MS $\mv^r$ on $s$. 
    
    Let $\maxCpuPerService$ denote the (maximal) bit-length required to represent the CPU allocation of any MS on any datacenter. Formally, 
    \begin{align}\label{eq:def_maxCpuPerChain}
    \maxCpuPerService = \ceil*{{\log_2 \max_{r \in \Rc, s \in \Sc}  \set{\beta^{r,s}}}} \,.   
    \end{align}
    \paolo{Must be $\log_2$ in the whole section}
     
    The record for MS $\mv^r$ in an \bu() or a \pu() packet consists of the following fields:
    \begin{inparaenum}[(i)]
    \item $\mv^r$'s MSId and classId, 
    \item the dcId's of $\mv^r$'s current and assigned location, and 
    \item the dcId of each datacenter in the list of $\mv^r$'s delay-feasible datacenters $\Sc^r$. 
    \end{inparaenum} 
    We now detail the size of each of \alg's packet types, namely, \bu\ packets, \pu\ packets, and \pd\ packets.
    
    \textbf{\bu\ and \pu\ packets.} 
    Each \bu\ or \pu\ packet consists of a list of MS records.
    Summing the sizes of the fields above, we obtain the following proposition.
    
    \begin{proposition}\label{prop:alg_comoh_bit_cnt_bu}
    If $P$ is a \bu\ or a \pu\ packet, then 
    \begin{align}
    \abs{P} =& \Mc(P) \Big[\abs{\textrm{MSId}} + \abs{\textrm{classId}} + \left(\abs{\Sc(u)} + 2\right) \abs{\textrm{dcId}}\Big] + H 
    \end{align}
    \end{proposition}
    
    \textbf{\pd\ packets.} 
    The record for MS $\mv^r$ in a \pd\ request packet includes all the fields of a MS record in a \bu\ packet, plus a field for $\beta^{u, s^*}$, namely, $\mv^r$'s CPU allocation on the push-down's initiator; this field is required for updating deficitCpu if $\mv^r$ is pushed-down (Line~\ref{alg:pd:update_by_pushed_dwn_chains_from_me} in Alg.~\ref{alg:pd}). The size of this field is $\maxCpuPerService$. 
    In addition to MS records, a \pd\ packet contains the following fields.
    \begin{inparaenum}[(i)]
    \item $s^*$, indicating the dcId of the push-down initiator, and 
    \item deficitCpu, indicating the amount of CPU to be pushed-down. In any reasonable system, this amount is, at most, the capacity of the root datacenter $C_r$. 
    \end{inparaenum} 		  
    Combining the reasoning above, we can evaluate the size of \pd\ packets as follows:
    \begin{proposition}\label{prop:alg_comoh_bit_cnt_pd}
    If $P$ is a \pd\ request packet, then 
    \begin{align}
    & \abs{P} = H  +  \abs{\textrm{dcId}} + 
    \ceil*{\log C_r}
    \notag \\ \notag 
    & + \Mc(P) \Big[ \abs{\textrm{MSId}} + \abs{\textrm{classId}} + \left[\abs{\Sc(u)} + 2\right] \abs{\textrm{dcId}} 
    + \maxCpuPerService
    \Big]. \notag & 
    \end{align}
    If $P$ is a \pd\ ack packet, then 
    \begin{align}
    & \abs{P} = H  +  \abs{\textrm{dcId}} + 
    \ceil*{\log C_r} 
    \notag \\ \notag 
    & + \Mc(P) \Big[ \abs{\textrm{MSId}} + \abs{\textrm{classId}} + \left[\abs{\Sc(u)} + 2\right] \abs{\textrm{dcId}} 
    \Big]. \\  \notag & 
    \end{align}
    \end{proposition}
}

\subsection{Optimizing the implementation}\label{sec:reduce_commoh} 
The upper bound implied by Theorem~\ref{thm:converge} assumes that each attempt to place a single service triggers a unique run of the protocol, thus incurring excessive overhead. 
To avoid such a case, observe that, typically, several users move together in the same direction (e.g.,  cars moving simultaneously on the road, on the same trajectory). 
Naively, such a scenario may translate into multiple invocations of \alg, each of them for placing a single requested service instance. 
Furthermore, initiating
too many simultaneous runs of \bu() that compete on the restricted resources in the same sub-tree 
may result in unnecessary protocol failures. 
To tackle this problem, we 
introduce short {\em accumulation delays} to our protocol. 
We let each datacenter receiving an \bu\ message wait for a short {\em \bu\ accumulation delay} before handling the new request. 
To deter long service latency or even deadlocks, each datacenter maintains a local \bu\ accumulation delay timer that operates as follows: if a run of \bu() on $s$ reaches Line~\ref{alg:bu:for_begin} in Alg.~\ref{alg:bu} while no \bu\ accumulation delay timer is ticking on $s$, the procedure initiates a new \bu\ accumulation delay timer. 
This current run of \bu and all the subsequent runs of \bu() in $s$ halt. After the \bu\ accumulation delay terminates, only a single \bu() process resumes in $s$.

Likewise, to initiate fewer runs of \pd(), we let each datacenter retain a single \pd\ accumulation delay mechanism that works similarly to the \bu\ accumulation delay mechanism. 
Significantly, the accumulation delay only impacts the time until the protocol finds a new feasible placement, not the latency affecting the data-plane information transfer. We assess the impact of the accumulation delay in Sec.~\ref{sec:sim:results}. 

\section{Performance Evaluation}\label{sec:sim} 
In this section, we compare the performance of \alg\ to state-of-the-art solutions. The comparison provides insights into the impact of the various system parameters.

\subsection{Experiments settings}\label{sec:sim_settings}

\textbf{Service areas and user mobility.}
We consider two real-world mobility traces, representing scenarios with distinct characteristics, that is, the vehicular traffic within the centers of the cities of (i) Luxembourg~\cite{Luxembourg},  and (ii) the  Principality of Monaco~\cite{Monaco}.   
For the PoAs, we rely on the real-world antenna locations publicly available in~\cite{OpenCellid}. For each of the above geographical areas, we consider the antennas of the cellular telecom provider having the largest number of antennas in the area.
For both traces, we consider the 8:20–8:30\,AM rush hour period. 
The settings of the service areas are detailed in Table~\ref{tab:sim_settings}. 
Further details about the mobility traces can be found in~\cite{Dynamic_Service_Provisioning_ToN, Luxembourg, Monaco}.

\begin{table}
    \scriptsize
    \centering
    \caption{Reference scenarios}
    {    \begin{tabular}{l|c|c|c|c|c|}
        \cline{2-6}
        & Area   & \#PoAs & \#Distinct & Avg.\ density       & Avg.\ speed \\
        &  [km$^2$]&  & vehicles   & [\#req/km$^2$] & [km/h] \\
        \hline
        \multicolumn{1}{|l|}{Luxemburg} & $6.8 {\times} 5.7$ & 1,524 & 25,497 & 56   & 15.4 \\
        \hline
        \multicolumn{1}{|l|}{Monaco}     & $3.1 {\times} 1$   & 231  & 13,788 & 2,297 & 9.0 \\
        \hline
    \end{tabular}
    }
    \label{tab:sim_settings}
\end{table}
		  
\textbf{Network and datacenters.}
The edge-cloud multi-tier network is structured as a 6-height tree;  a topology level is denoted by $\ell{\in}\{0, 1, \dots, 5\}$. The leaves (level 0) are the datacenters at the PoAs (i.e., co-located with the radio points of access). 
Similarly to~\cite{Avatar, dynamic_sched_and_reconf_t, NFV_ego_learning}, the higher levels $5, 4, 3, 2, 1$ recursively partition the simulated area. 
In both Luxembourg and Monaco, if no PoAs exist in a particular rectangle, the respective datacenters are pruned from the tree. 
The CPU capacity increases with the level $\ell$ to reflect the larger computational capacity in datacenters closer to the cloud. Denoting the CPU capacity at each leaf datacenter by $C_\text{cpu}$, the CPU capacity at level $\ell$ is $(\ell+1) \cdot C_\text{cpu}$. 
	 
\textbf{Services and costs.} Each vehicle that enters a considered geographical area is randomly marked as requesting either real-time (RT) or non-RT services, with some probability defined later. 
We set the RT latency constraint
 equal to 10~ms, corresponding to an automotive safety service (e.g., collision avoidance~\cite{ca})  with tight delay constraints,
while the non-RT latency constraint is 100~ms, representing, e.g., a see-through application~\cite{see-through}. 

We set our objective function's migration and computation cost functions in~\eqref{Eq:def_obj_func} such that they are no more than an order of magnitude apart, and none becomes negligible under our scenario's specific settings. 
Accordingly, the computation cost of 100~MHz of CPU at level $\ell$ is $\chi^c= 2^{5-\ell}$ cost units, to reflect the decrease in computation costs when moving from the edge to the cloud~\cite{tong2016_justify_path_to_root, SFC_mig_short}. 

To account for the communication cost, we add a cost of 3 for each bit traversing a link. The migration cost is $\chi^m=600$ cost units. For further discussion of these cost parameters, see~\cite{Dynamic_Service_Provisioning_ToN}. 

We calculate the required resources $\Sc_r$, $\beta^{r,s}$, and the implied costs $\psi^c(r,s)$ for each $r {\in} \Rc$ and $s {\in} \Sc$ using the 
\cpuall\ (GetFeasibleAllocations) algorithm~\cite{Dynamic_Service_Provisioning_ToN}
We thus obtain the following values.   
Each RT request can be placed at levels 0, 1, or 2 in the tree, requiring CPU of 17, 17, and 19~GHz, associated with costs of 544, 278, and 164, respectively. 	 
Each non-RT request can instead be placed at any level, with a fixed allocated CPU of 17~GHz and associated costs of 544, 278, 148, 86, 58, and 47 for placing the service at levels 0, 1, 2, 3, 4 and 5, respectively. 
These costs reflect the significantly lower CPU costs closer to the cloud. 
The migration cost is $\psi^m(r,s,s') {=} 600$ for every request $r$ and datacenters $s, s'$.
We assign the size of the control-plane messages according to Table~\ref{tab:pkt_fields_size}. 
In Table~\ref{tab:pkt_fields_size}, we consider that each service belongs to a specified {\em class} of latency constraint.
Given the classId of request $r$, each datacenter $s$ can calculate the amount of CPU required for placing service $\mv^r$ on $s$. The fields $\abs{\textrm{serviceId}}$ and $\abs{\textrm{dcId}}$ represent the number of bits required to represent the ID of a service and the ID of a datacenter, respectively.
$\abs{\beta^{u,s}}$ represents the number of bits used to specify the CPU resources a service $u$ needs when placed on datacenter $s$.
$\ceil*{\log_2 C_r}$ indicates the number of bits required to represent deficitCpu, namely, the amount of CPU to be pushed-down. In any reasonable system, deficitCpu is, at most, the capacity of the root datacenter $C_r$. 

\begin{table}
    \scriptsize
    \centering
    \caption{Size of packet's fields [bits]}
    {    
\begin{tabular}{|c|c|c|c|c|c|}
        \hline
        \rule{0pt}{2.4ex}    
        Header & $\abs{\textrm{dcId}}$ & $\abs{\textrm{serviceId}}$ & $\abs{\textrm{classId}}$ & 
        $\abs{\beta^{u,s}}$ &
        $\ceil*{\log_2 C_r}$ \\ \hline
        80 & 12 & 14 & 4 &5 & 16 \\ \hline
    \end{tabular}
    }
    \label{tab:pkt_fields_size}
\end{table}

\textbf{Delays.}
 To assess the effect of the network delays on the decision process of \alg, we focus on the control traffic exchanged between datacenters, considering both the transmission delay and the propagation delay at the packet level. 
The transmission delay is calculated based on the capacity allocated for the control plane at each link, through a dedicated network slice. We set the control plane capacity equal to 10~Mbps. 
For the propagation delay, we assume a constant propagation delay for each link, related to the diameter of the considered area.
Consequently, the propagation delay of each link in Luxembourg and Monaco is $22~\mu \text{s}$ and $8~\mu \text{s}$, respectively.  
For a datacenter at level $\ell$,  \bu\ accumulation delay, and \pd\ accumulation delay are, respectively,
$(\ell+1) \cdot \accDelayBu$, 
and $(\ell+1) \cdot \accDelayPd$. 
To make the accumulation delays an order of magnitude smaller than the propagation delays stated above, we set by default $\accDelayBu{=}0.1$\,ms. To decrease the number of push-down runs, we set the push-down accumulation delay to a slightly higher value, namely, $\accDelayPd{=}0.4$\,ms. These are default values, that we later vary to study their impact on the algorithm's performance.  

Each time the protocol calls the \pd() procedure, it stays in {\em F} mode (recall Sec.~\ref{sec:alg:assure_converge}) for the next 10 seconds.

\textbf{Benchmark algorithms.} 
To the best of our knowledge, no fully distributed, asynchronous algorithm exists for the \migProb\ we posed. Hence, we consider {\em centralized} placement schemes that identify the currently critical and newly requested services once every second, and solve the respective \migProb. Specifically, we consider the following algorithms.
							  
{\em Lower Bound (LBound):} An optimal solution to the relaxed version of \migProb, which can place fractions of a service on distinct datacenters.
It is computed using Gurobi optimizer~\cite{Gurobi}.
 Also, the LP formulation can consider and act upon not just critical services, but all services in the system every 1~s period. Hence, it serves as a lower bound on the cost of any feasible solution to the problem.

{\em \ffit}: It places each request $r$ on the lowest datacenter in $\Sc_r$ that has sufficient available resources to place service $\mv^r$. This is an adaptation to our problem of the placement algorithm proposed in Sec.\,IV.B in~\cite{App_placement_in_fog_n_edge}. 

{\em BUPU~\cite{Dynamic_Service_Provisioning_ToN}:} It consists of two stages. At the {\em bottom-up} stage, it places all the critical and new services as low as possible in the tree topology. If this stage fails to place a service while considering datacenter $s$, the algorithm re-places {\em all} the services associated with $s$'s sub-tree from scratch. 
Later, BUPU performs a push-up stage similar to our \alg's \pu() procedure.  In summary, BUPU can be considered a centralized version of \alg.

{\em \cpvnf~\cite{CPVNF_proactive_place_in_CDN}:} It orders the critical and newly-arriving services in non-increasing order of the CPU capacity they require, if placed on a leaf datacenter. Then, it places each service on a feasible datacenter that minimizes the contribution of this service to the overall cost in~\eqref{Eq:def_obj_func}. This benchmark is an adaptation to our problem of the CPVNF algorithm~\cite{CPVNF_proactive_place_in_CDN}, which was also used as a benchmark in~\cite{VNF_place_in_public_cloud}. 

{\em \ms~\cite{MultiScaler_Kassler}:}
It orders the critical services in non-decreasing order of the available CPU capacity in their current datacenter. It breaks ties by decreasing order of the CPU currently allocated for the service. Newly arriving requests are associated with the lowest priority, breaking ties by prioritizing services having fewer latency-feasible datacenters.
Each critical/new service is placed on the latency-feasible datacenter with the highest available capacity. 
This is an adaptation to our problem of the algorithm implied by (13)-(15) in~\cite{MultiScaler_Kassler}.

\textbf{Simulation methodology.}
We simulate users' mobility using SUMO~\cite{SUMO2018}. Benchmark algorithms utilize publicly available Python code~\cite{SFC_mig_Github}, while 
\alg\ is implemented using OMNeT++ network simulator~\cite{omnet}. 
Each new user, or an existing user which becomes critical, invokes a run of \alg, executed on the datacenter of the PoA serving the user. \alg's code is available in~\cite{dist_SFC_mig_Github}.

\subsection{Numerical results}\label{sec:sim:results}

{\bf Minimum required resources.}
We now study the computation resources each algorithm needs to allocate for the service requests arriving in each scenario, generated according to the user's mobility described by the considered real-world trace.  
A binary search is used to find the minimum amount of computation resources needed by each algorithm to successfully place all the critical services.
Clearly, a lower value of the required CPU implies better resource utilization.

Fig.~\ref{fig:cpu_vs_RT_prob} presents the results of this experiment, when varying the fraction of RT services.
The $y$ axis shows the overall CPU resources in the system. The required CPU resources consistently increase for higher fraction of RT services, as tighter latency constraints dictate allocating more resources closer to the edge. 
This phenomenon is especially noticeable in  Monaco, where edge resources are scarcer  (only 231 leaf datacenters in Monaco, compared to 1,524 in Luxembourg). 

For both scenarios, the amount of CPU BUPU requires is almost identical to LBound. Despite being fully distributed and asynchronous, the amount of CPU needed by \alg\ is only slightly higher than BUPU. 
A relatively low requirement of CPU capacity is also achieved by \ms, thanks to its clever prioritization of requests and datacenters, which effectively balances the processing load among the datacenters. 
Finally, \ffit\ and \cpvnf\ require processing capacities that are 50\% to 100\% higher than \lBound\ to provide a feasible solution.

\begin{figure*}
    \centering
    \subfloat[\label{fig:cpu_vs_RT_prob_Lux}Luxembourg] {
        \includegraphics[width=\subFigWidth]{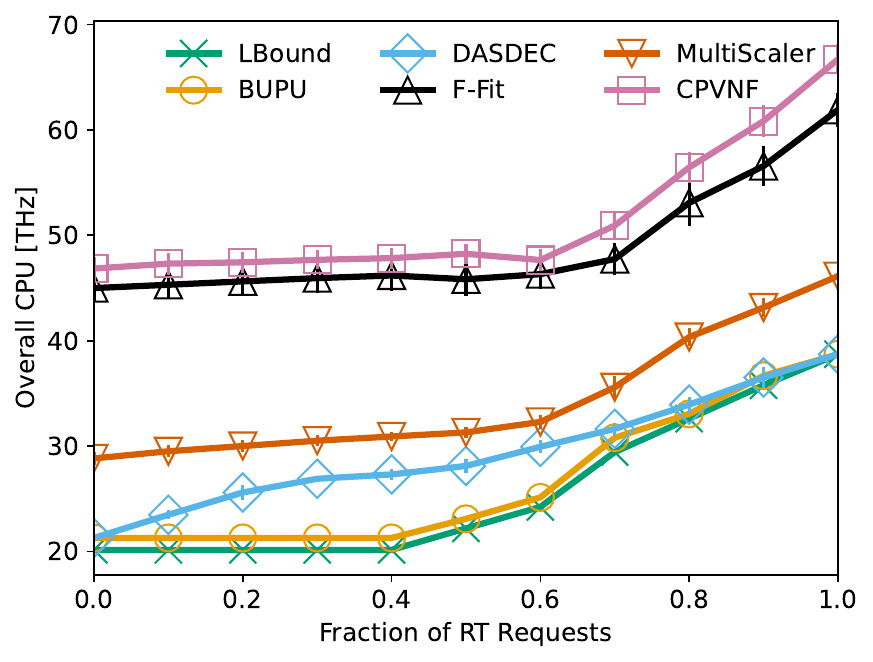}
    }
    \subfloat[\label{fig:cpu_vs_RT_prob_Monaco}Monaco] {
    \includegraphics[width=\subFigWidth]{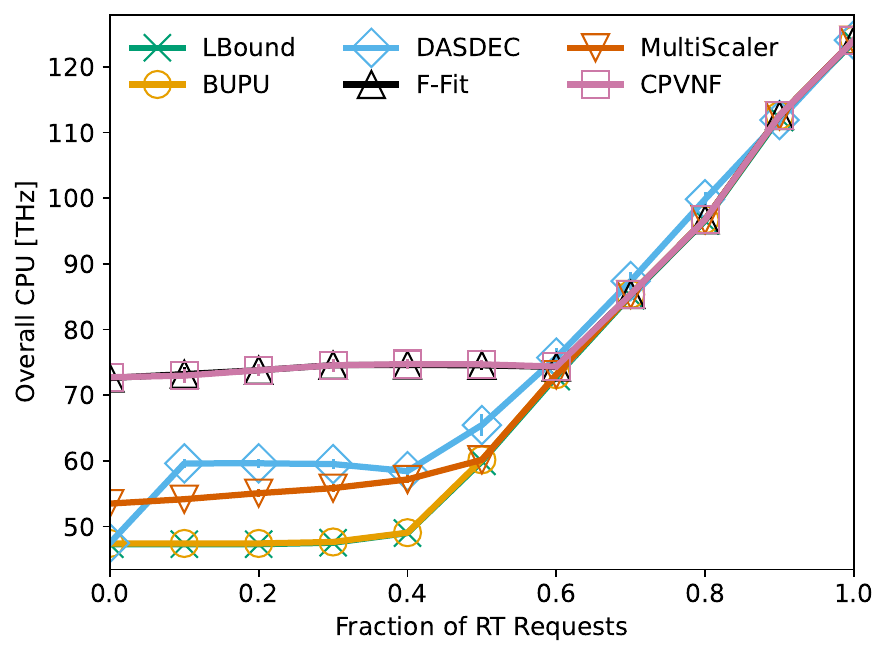}
    }
    \caption{Minimum required processing capacity when varying the ratio of RT service requests.}
    \label{fig:cpu_vs_RT_prob}
\end{figure*}

\begin{figure*}
														
    \centering
    \subfloat[\label{fig:ComOh_Lux}Luxembourg] {
        \includegraphics[width=\subFigWidth]{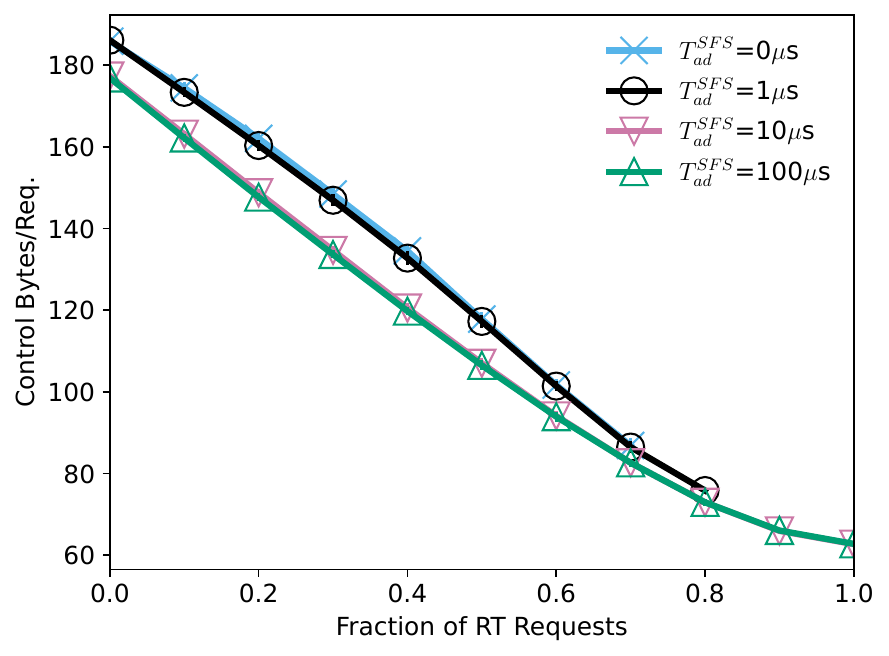}
    }
    \subfloat[\label{fig:ComOh_Monaco}Monaco] {
        \includegraphics[width=\subFigWidth]{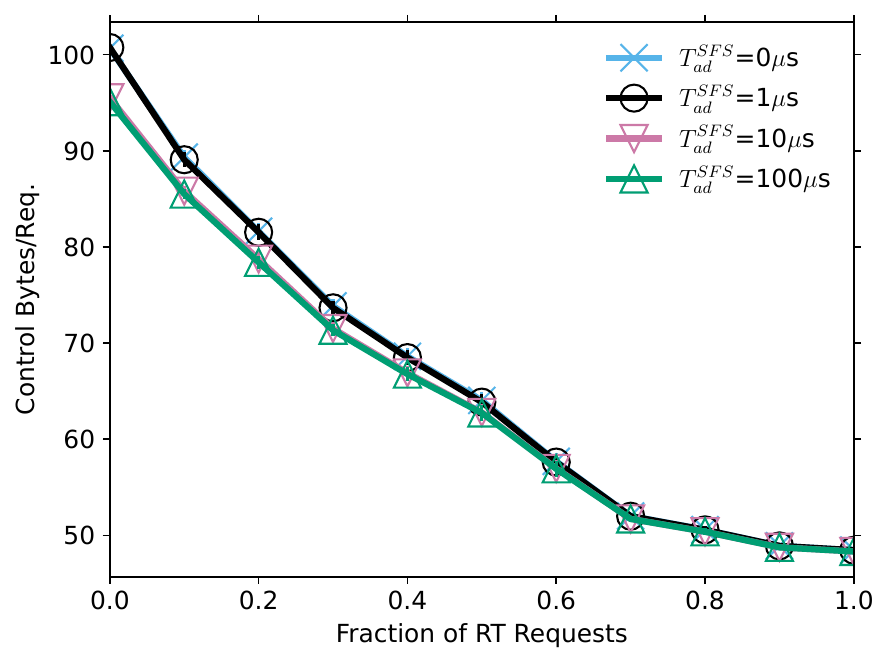}
    }
    \caption{Per-request signaling overhead due to \alg.}
    \label{fig:ComOh}
    \vspace{-0.1 cm}
\end{figure*}

{\bf Communication overhead.} 
The communication overhead is a key metric for the performance of any distributed algorithm. Hence, we now study the \alg's communication overhead for different scenarios. 

For each simulated scenario, we set the CPU resources to 10\% above the amount of resources required by LBound, to find a feasible solution when 100\% of the requests are RT. While maintaining this amount of CPU resources, we vary the ratio of RT requests and measure \alg's communication overhead. 
Our metric for the communication overhead is the overall byte count of all the messages exchanged by \alg\ (``control bytes'') during a mobility trace, over the overall number of critical/new requests occurring in that time period. 
Fig.~\ref{fig:ComOh} presents the per-request communication overhead for different values of the \bu\ accumulation delay parameter $\accDelayBu$.  

To set \pd's accumulation delay parameter $\accDelayPd$, we observe that a run of \pd() may also migrate non-critical services, thus incurring a higher overhead than \pu(). Hence, we set $\accDelayPd {=} 4 \cdot \accDelayBu$. 

The results show that increasing the fraction of RT requests decreases the signaling overhead. The reason is that RT requests can be placed only in the three lowest levels of the tree, thus avoiding any signaling message between higher-level datacenters. When the accumulation delays increase, the protocol aggregates more requests before sending a message, thus decreasing the signaling overhead. 
However, an accumulation delay of about $10~\mu\text{s}$ suffices. 
We stress that the accumulation delay only impacts the time until the protocol finds a new feasible placement, not the latency experienced by the user enjoying the service. Indeed, delaying the migration decision may deteriorate the user's QoE. However, this performance deterioration can be mitigated using an efficient prediction mechanism for the user's mobility. Furthermore, in practical scenarios, an accumulation delay of about $10~\mu\text{s}$ may be negligible compared to the latency incurred by the migration process.  
Finally, in all the considered settings, the signaling overhead associated with each request is around $100$~bytes, implying a negligible bandwidth overhead.

{\bf Cost comparison.} 
In the next experiment, we compare the cost of various solutions for the \migProb. We set the ratio of RT requests to 30\%, and vary the resource augmentation. 
Table~\ref{tab:cost_vs_rsrc} details the results of this comparison.  
 The CPU values $C_{cpu}$ are normalized with respect to the minimum amount of CPU required by \lBound\ to find a feasible solution $\hat{C}_{cpu}$ in the corresponding scenario. For a meaningful comparison, the cost at each row in the table is normalized w.r.t. the cost of \lBound's solution to the same settings. The table shows the normalized cost obtained by different algorithms when varying the normalized cost $C_{cpu}/\hat{C}_{cpu}$. For each scenario and algorithm, the table also details the minimal amount of CPU required in the scenario to find a feasible solution.
When no feasible solution is found, we define the cost as infinite.

\alg\ and BUPU achieve the lowest costs. BUPU, leveraging a centralized controller with a consistently fresh and accurate view of the whole system's state, successfully finds a feasible solution using resources that are only slightly higher than the lower bound. Even though \alg\ does not rely on a single centralized controller, it needs just slightly higher resources than BUPU to find a feasible solution. When resources are abundant (e.g., twice the minimal resources $\hat{C}_{cpu}$ required by LBound to find a feasible solution), most of the benchmark algorithms obtain costs that are almost identical to the lower bound. The highest costs are those obtained by \ms, which tends to favor load-balancing over cost reduction.

\begin{table}[tb!]
    \centering
{\scriptsize
    \caption{\label{tab:cost_vs_rsrc}{Normalized cost of service deployment and migration  vs.\  resource augmentation. The costs are normalized with respect to the lower bound. An infinite cost indicates that the algorithm cannot find a feasible solution}} 
        \begin{tabular}{|c|c|c|c|c|c|}
            \hline
            \multicolumn{6}{|c|}{Luxembourg}\\
            \hline
            \multirow{3}{*}{$\frac{C_{cpu}}{\hat{C}_{cpu}}$} &
            \multicolumn{5}{c|}{Normalized Cost}\\ 
            \cline{2-6} &
                    DAPP- & 
                    \multirow{2}{*}{BUPU} &
                    \multirow{2}{*}{\ms} &
                    \multirow{2}{*}{F-Fit} &
                    \multirow{2}{*}{CPVNF} \\
                    & ECC & & & &
                    \\ \hline
            1.06	& $\infty$  & 1.10	& $\infty$ & $\infty$ &	$\infty$ \\ \hline             
            1.33	& 1.08  	& 1.11  & $\infty$ & $\infty$ &	$\infty$ \\ \hline  
            1.54	& 1.05      & 1.06  & 1.27     & $\infty$ & $\infty$ \\ \hline 
            2.35	& 1.04      & 1.03	& 1.34     & 1.06     &	$\infty$ \\ \hline 
            2.40	& 1.04      & 1.03	& 1.34     & 1.06     &	1.06     \\ \hline 

            \multicolumn{6}{c}{}\\

            \hline
            \multicolumn{6}{|c|}{Monaco}\\
            \hline
            \multirow{3}{*}{$\frac{C_{cpu}}{\hat{C}_{cpu}}$} &
            \multicolumn{5}{c|}{Normalized Cost}\\ 
            \cline{2-6} &
                    DAPP- & 
                    \multirow{2}{*}{BUPU} &
                    \multirow{2}{*}{\ms} &
                    \multirow{2}{*}{F-Fit} &
                    \multirow{2}{*}{CPVNF} \\
                    & ECC & & & &
                    \\ \hline
            1.002	& $\infty$  & 1.07	& $\infty$ & $\infty$ &	$\infty$ \\ \hline             
            1.26	& 1.06  	& 1.06  & $\infty$ & $\infty$ &	$\infty$ \\ \hline  
            1.50	& 1.07      & 1.07	& 1.35     & $\infty$ & $\infty$ \\ \hline 
            1.58	& 1.08      & 1.09	& 1.36     & 1.10     &	1.10     \\ \hline 
            2.00	& 1.04      & 1.05	& 1.34     & 1.05	  & 1.05 	 \\ \hline 
        \end{tabular}           		
}
\end{table}

\section{Related Work}\label{sec:related_work}
State-of-the-art solutions to service deployment and migration~\cite{CPVNF_proactive_place_in_CDN, MoveWithMe, mig_correlated_VMs, dynamic_sched_and_reconf_t, SFC_mig, Dynamic_SFC_by_rtng_Dijkstra, Avatar, mig_or_reinstall, tong2016_justify_path_to_root, joint_placement_and_accs_netw}
assume a centralized orchestrator that possesses fresh, accurate information about all users' locations, trajectories, and computational demands, and the available resources at all the datacenters. 
Such an assumption may be impractical in a large system, possibly operated by several operators. This advocates the design of distributed approaches, as \alg.

Another approach is to let each service instance independently select an optimal destination based on factors such as topological distance, the availability of resources at the destination, and the data protection level at the destination~\cite{Companion_Fog, NFV_ego_learning, Dist_aleck_actually_ego_21}. This approach assumes that every service has a full fresh view of the available resources at each node. Such an assumption may be impractical in a large network, possibly operated by different providers. Furthermore, such a myopic strategy may fail to provide a feasible system-level solution when multiple RT users compete for the scarce resources on the edge.   

Several recent works~\cite{dist_SFC_aleck, orch_dist_openStack} refer to ``distributed placement'' but, in practice, focus on clustering, namely, improving the scalability by partitioning resources into small sets of datacenters on which each concrete service can be deployed.
The work~\cite{App_placement_in_fog_n_edge} uses dynamic clustering of datacenters to handle multiple simultaneous independent placement requests. However, the complex clustering mechanism may result in significant communication and computational overhead. Also,~\cite{App_placement_in_fog_n_edge} does not consider migrating non-critical requests to make room for a new one, as we do. 
    
Several works~\cite{Obl_vs_stateful_edge-cloud_Marco, Dist_edge-cloud_Marco, Edge_admission_cntrl_balk} consider the problem of directing a service either to an edge datacenter, or to the cloud. These papers use rigorous queuing-theory-based analysis to set the ratio of services directed to the cloud, where the objective is to minimize the service's completion time. The analysis highlights the potential benefits of a distributed approach, compared to the centralized approach, whose performance may significantly deteriorate in the face of wrong state estimation. However, these studies do not address user mobility and service migration, as considered in our work.

The \migProb\ combines several properties of the Multiple Knapsack problem~\cite{MultiKnap} with added restrictions typical for bin-packing problems (e.g., each item can be packed only on a subset of the knapsacks, and a feasible solution must pack all the items). In contrast to the usual settings of such problems, we aim at a distributed and asynchronous scheme that runs independently on multiple datacenters (``knapsacks''), using only little communication between them.  
APSR~\cite{APSR} considers a distributed randomized approach to place VM placement on a single cloud datacenter. 

A large body of work~\cite{Dance_elephants_VM_mig, Linux_Containers_mig_18, dynamic_sched_and_reconf_t, live_service_mig_layers, Kassler_SDN_mig18, Reconf_time, Mig_VNF_study} focuses on lower-level implementational details of the migration process, with the aim to decrease its overhead (service downtime, migration time, and quantity of resources consumed by migration). 
For instance,~\cite{Dance_elephants_VM_mig, Linux_Containers_mig_18, dynamic_sched_and_reconf_t} cleverly select which data to migrate and in what order and pace. 
The work~\cite{Multiple_mig_sched_Buyya} considers the bandwidth allocated to each migration process. 
Instead, implementation issues and performance overhead of virtual machine live migration are discussed in~\cite{Mig_VNF_study}. 
Some of these implementation-level details are orthogonal to the migration and placement protocol and, hence, could be incorporated into a practical implementation of \alg.

The prediction of future required migration typically relies on the users' trajectories and signal strengths~\cite{Dist_aleck_actually_ego_21}. 
The works~\cite{NFV_ego_learning, LSTM_predict_and_then_optimize, SARIMA} leverage learning algorithms to predict future service requests. \cite{LSTM_predict_and_then_optimize} further shows how distinct mobile operators can efficiently and securely cooperate to predict future requests. 
Many such prediction techniques can be incorporated into our solution, e.g., by giving the predicted future traffic and users' mobility as input to our placement protocol.

An early version of this work~\cite{Dist_SoftCom} sketched our algorithmic solution and presented a preliminary performance evaluation.
   
\section{Conclusions}\label{sec:conclusion}

We proposed a distributed asynchronous solution framework for service provisioning in edge-cloud multi-tier systems. 
As a distributed and asynchronous approach, our solution is well-suited for computing infrastructures managed by multiple operators and scales effectively with the number of datacenters.
Furthermore, our algorithm converges after exchanging a bounded number of control messages.
Numerical results from realistic settings demonstrate that our approach achieves placement with minimal additional computing resources compared to centralized solutions, which may not scale well in large systems.
Importantly, we showed that the required communication overhead is negligible.

\bibliographystyle{IEEEtran}
\bibliography{Refs}

\begin{IEEEbiographynophoto}{Itamar Cohen}
(Member, IEEE) is currently a Faculty Member at the School of Computer Science, Ariel University at Ariel, 40700, Israel. His research interests include scheduling, cooperative caching, and decision-making under uncertainty. 
\end{IEEEbiographynophoto}
\begin{IEEEbiographynophoto}{Paolo Giaccone}
(Senior Member, IEEE) is currently a Professor with the
Department of Electronics and Telecommunications, Politecnico di Torino,
Italy. His main research interests include the design of network control and
optimization algorithms.
\end{IEEEbiographynophoto}
\begin{IEEEbiographynophoto}{ Carla Fabiana Chiasserini}
 (Fellow, IEEE) was a Visiting Researcher with the University of California, San Diego (UCSD). She was a Visiting Professor with Monash University from 2012 to 2016 and Technische Universität Berlin (TUB) from 2021 to 2022. She is currently a Professor with Politecnico di Torino.
\end{IEEEbiographynophoto}

\end{document}